\def\AIC{\textsc{aic}}
\def\T{{ \mathrm{\scriptscriptstyle T} }}
\DeclareMathOperator*{\argmin}{arg\,min}
\let\oldnorm\norm
\def\norm{\@ifstar{\oldnorm}{\oldnorm*}}
\DeclareMathOperator{\rank}{rank}
\DeclareMathOperator{\cov}{cov}
\DeclareMathOperator{\diag}{diag}
\DeclareMathOperator{\trace}{tr}
\renewcommand{\hat}{\widehat}
\renewcommand{\tilde}{\widetilde}
\newtheorem{condition}{Condition}
\newtheorem{algorithm}{Algorithm}
\newtheorem{example}{Example}
\newtheorem{thm}{Theorem}
\begin{document}
\firstpage{1}
\lastpage{26}
\jvol{}
\issue{}
\jyear{2017}

\renewcommand{\eqref}[1]{(\ref{#1})}
\newcommand{\mb}[1]{\mathbf{#1}}
\newcommand{\mbb}[1]{\mathbb{#1}}
\newcommand{\mt}[1]{\mathrm{#1}}
\newcommand{\rv}{random variable}

\title[]{High-dimensional covariance matrix estimation using a low-rank and diagonal decomposition}
\author{Yilei Wu \authorref{1}}
\author{Yingli Qin \authorref{1}}
\author{Mu Zhu \authorref{1}}
\affiliation[1]{The University of Waterloo} 

\startabstract{%
\keywords{
\KWDtitle{Key words and phrases}
Akaike information criterion\sep coordinate descent\sep consistency\sep eigen-decomposition\sep Kullback--Leibler loss\sep log-determinant semi-definite programming\sep Markowitz portfolio selection.}
\begin{abstract}
\abstractsection{}
\abstractsection{Abstract}
We study high-dimensional covariance/precision matrix estimation under the assumption that the covariance/precision matrix can be decomposed into a low-rank component $L$ and a diagonal component $D$. The rank of $L$ can either be chosen to be small or controlled by a penalty function. Under moderate conditions on the population covariance/precision matrix itself and on the penalty function, we prove some consistency results for our estimators. A blockwise coordinate descent algorithm, which iteratively updates $L$ and $D$, is then proposed to obtain the estimator in practice. Finally, various numerical experiments are presented: using simulated data, we show that our estimator performs quite well in terms of the Kullback--Leibler loss; using stock return data, we show that our method can be applied to obtain enhanced solutions to the Markowitz portfolio selection problem.
\end{abstract}}
\makechaptertitle


\section{INTRODUCTION}                          
\label{sec:intro}
\subsection{Brief review}
\label{sec:review}
Statistical inference in high-dimensional settings, where the data dimension $p$ is close to or larger than the sample size $n$, has been an intriguing area of research. Applications include gene expression data analysis, fMRI analysis, climate studies, financial economics, and many others. Estimating large covariance matrices is an essential part of high-dimensional data analysis because of the ubiquity of covariance matrices in statistical procedures, such as discriminant analysis and hypothesis testing. However, in high dimensions, the sample covariance matrix $S$ is no longer an accurate estimator of the population covariance matrix; it may not even be positive definite. To overcome these difficulties, researchers have been developing new methods.

The simplest solution is to use a scaled identity or a diagonal matrix as a substitute for the sample covariance matrix. It is well-known that the sample covariance matrices tend to overestimate the large eigenvalues and underestimate the small eigenvalues of the population covariance matrix; this bias can be corrected by shrinking the sample covariance matrix towards a scaled identity matrix, e.g., $\trace(S)I_p$ \citep{friedman1989regularized}. An optimal weight for the convex linear combination between the sample covariance matrix and the identity matrix has been proposed and studied by \cite{ledoit2004well}. Ignoring the correlations and preserving only the diagonal part of $S$ is a long-established practice in the high-dimensional classification, often referred to as the independence rule or the ``naive Bayes classifier''; it has been demonstrated to outperform Fisher's linear discriminant rule under certain conditions \citep{dudoit2002comparison,bickel2004some,fan2008high}. 

Apart from the scaled identity matrix and the diagonal matrix, other structured estimators have also been proposed. Methods such as banding \citep{bickel2008regularized} and tapering \citep{furrer2007estimation} are useful when the covariates have a natural ordering \citep{rothman2008sparse}. \cite{cai2013optimal} studied banding and tapering estimators in estimating large Toeplitz covariance matrices, which arise in the analysis of stationary time series. \cite{ourQDA} exploited the compound symmetry structure to facilitate quadratic discriminant analysis (QDA) in high dimensions.    

Another popular assumption is sparse covariance or precision matrices. Sparse covariance matrix estimators can be obtained by either thresholding or regularization. Thresholding has been studied by \cite{bickel2008regularized} and \cite{cai2011adaptive}, and applied in discriminant analysis by  \cite{shao2011sparse} and \cite{li2014sparse}. To encourage sparsity, \cite{rothman2012positive} and \cite{xue2012positive} imposed lasso-type penalties on the covariance matrix. Sparsity is a good assumption for the precision matrix in many applications, e.g., for Gaussian data zeros in the precision matrix suggest conditional independence; it can be achieved directly by imposing an $\ell_1$ penalization on the precision matrix  \citep{yuan2007model,rothman2008sparse,banerjee2008model,friedman2008sparse, lam2009sparsistency,cai2011constrained} or indirectly through regularized regression \citep{meinshausen2006high,rocha2008path,yuan2010high,sun2013sparse}.

In the context of high-dimensional data analysis, it is reasonable to assume that the variance of the observed data can be explained by a small number of latent factors; thus, factor models can be applied to reduce the number of parameters in covariance matrix estimation, too. Assuming observable factors and independent error terms, \cite{fan2008high2} proposed a covariance matrix estimator by estimating the loading matrix with regression and the covariance matrix of the error terms with a diagonal matrix. This method was generalized by \cite{fan2011high} so that the error covariance was not necessarily diagonal, but it was assumed to be sparse and estimated with thresholding techniques. \cite{fan2013large} then considered the case where the factors are unobservable. Assuming the number of latent factors ($k$) to be known, they performed PCA on the sample covariance matrix, kept the first $k$ principal components to estimate the covariance matrix of the latent factors, and thresholded the remaining principal components to estimate a sparse covariance matrix for the error terms. 

A related matrix structure is called ``spiked covariance matrix'', that is, the covariance matrix has only a few eigenvalues greater than one and can be decomposed into a low-rank matrix plus an identity matrix \citep{johnstone2001distribution}. \cite{cai2015optimal} proposed a sparse spiked covariance matrix estimator. In addition to the spiked structure, they assumed that the matrix spanned by the eigenvectors of the low-rank component has a small number of nonzero rows, which in turn constrains the covariance matrix to have a small number of rows and columns containing nonzero off-diagonal entries. 

\cite{chandrasekaran2010latent} proposed a latent variable method for Gaussian graphical model selection, based on the conditional independence interpretation of zero off-diagonals in the precision matrix. Assuming the observable and latent variables are jointly distributed as Gaussian, they showed that, if one assumes (i) the conditional precision matrix of the observables given the latent factors is sparse and (ii) the number of latent factors is small, then the marginal precision matrix of the observables must consist of a sparse component plus a low-rank component. The authors then considered a penalized likelihood approach to estimate such a marginal precision matrix, using the $\ell_1$-norm to regularize the sparse component and the nuclear-norm to regularize the low-rank component. They also derived some consistency results for their estimator in the operator norm. \cite{taeb2016interpreting} extended this framework to allow the incorporation of covariates. 

A comprehensive review has been provided by \cite{cai2016estimating}, in which they also compared some of the aforementioned methods in terms of their respective convergence rates. 

\subsection{Summary of this paper}

In this paper, we make the explicit structural assumption that the population covariance/precision matrix can be decomposed into a low-rank plus a diagonal matrix, in order to facilitate the estimation of large covariance/precision matrices in high dimensions. In Section \ref{sec:decomposition}, we discuss this main model assumption in more detail. 

While this model assumption is similar (but not identical) to some of the works reviewed in Section~\ref{sec:review}, the main difference is that we do not rely on nuclear norm regularization to promote low-rank-ness; instead, we directly impose a penalty on the matrix rank itself. 
In Section \ref{sec:FR} and Section \ref{sec:rank-penalty}, we present estimators of the covariance/precision matrix under this model assumption, and show that estimation consistency can be achieved with a proper choice of the penalty function. 

As is often the case, our estimators are characterized, or defined, as solutions to various optimization problems. In Section \ref{sec:algorithm}, we describe an efficient blockwise coordinate descent algorithm for solving the main optimization problem. In particular, given the low-rank component, the diagonal component can be obtained by solving a relatively cheap log-determinant semi-definite program; given the diagonal component, the low-rank component actually can be obtained analytically. Since optimization with nuclear-norm constraints is still computationally burdensome for large matrices, we think our approach, which avoids nuclear-norm regularization, can be especially attractive. 

In Section \ref{sec:numerical} and Section \ref{sec:realdata}, we demonstrate the performances of our method with various simulations and an analysis of some real financial data. All proofs are relegated to the supplementary material.

\section{LOW-RANK AND DIAGONAL MATRIX DECOMPOSITION}
\label{sec:decomposition}
\subsection{Notations} 
We use $\mathbf{R}^{p_1 \times p_2}$ to denote the set of $p_1 \times p_2$ matrices, $\mathbf{S}^p$ to denote the set of symmetric $p \times p$ matrices, $\mathbf{S}_{+}^p$ to denote the subset of matrices $\subset \mathbf{S}^p$ which are positive semi-definite, and $\mathbf{S}_{++}^p$ to denote the subset of those which are strictly positive definite. Sometimes, another superscript is added to denote a restriction on the rank, for example, $\mathbf{S}^{p,r}$ is used to denote the subset of matrices in $\mathbf{S}^p$ with rank $\leq r$, and likewise for $\mathbf{S}^{p,r}_+$, $\mathbf{S}^{p,r}_{++}$. For the corresponding sets of diagonal matrices, we replace $\mathbf{S}$ with $\mathbf{D}$ , e.g., $\mathbf{D}^p$, $\mathbf{D}_{+}^p$, and $\mathbf{D}_{++}^p$.

For any $A \in \mathbf{S}^{p}$, we use $\trace(A)$ to denote its trace, $|A|$ to denote its determinant, and $\lambda_{\max}(A), \lambda_{\min}(A)$ to denote its largest and smallest eigenvalues. Furthermore, we use $\|A\|_F=\{\trace(A^{\T}A)\}^{1/2}$ to denote its Frobenius norm, $\|A\|_*=\trace\{(A^{\T}A)^{1/2}\}$ to denote its nuclear norm (which is equivalent to the sum of its singular values), $\|A\|_{op}=\{\lambda_{\max}(AA^{\T})\}^{1/2}$ to denote its operator norm, and $\|A\|_1=\sum_{i,j}|A_{ij}|$ to denote its $\ell_1$ norm.

\subsection{Problem set-up and model assumption}
Consider a random sample $X=\left(x_1,\ldots,x_n\right)$, in which $x_1,\ldots,x_n$ are independently and identically distributed $p$-variate random vectors from the multivariate normal distribution with population mean $0$ and population covariance matrix $\Sigma_0$. (We assume that the data have been centered in order to focus on the covariance matrix estimation problem alone, but it is important to point out that, in high dimensions, even estimating the mean vector is an intricate problem and much research has been conducted to address it.) The sample covariance matrix $S$, is a natural estimator of $\Sigma_0$ if $p$ is fixed and $n \to \infty$, but it can perform badly when $p$ is close to or larger than $n$, so some additional structural constraints are needed in order to facilitate estimation. In this paper, we study a particular type of such structural constraints. 

The main model assumption in our work here is that the population covariance matrix, $\Sigma_0 \in \mathbf{S}^{p}_{++}$, can be decomposed as 
$$\Sigma_0=L_{\Sigma_0}+D_{\Sigma_0},$$ 
in which $L_{\Sigma_0} \in \mathbf{S}^{p,r_0}_+$ is a row-rank matrix for some $r_0 \leq p$, and $D_{\Sigma_0} \in \mathbf{D}^p_{++}$ is a diagonal matrix. 

Such a decomposition is always possible as long as $r_0 \leq p$, but only for reasonably small $r_0$ is the assumed decomposition interesting and valuable for estimating large covariance matrices. Thus, for a particular matrix $\Sigma_0$, we define $r_0$ as the smallest among all attainable ranks of $L_{\Sigma_0}$ after the decomposition, i.e., $r_0=\rank(L^*)$ in which
\begin{eqnarray}
\label{eq:r0 def}
L^*&=&\underset{L}{\argmin}~\rank(L),\nonumber\\
\text{subject to}&& L+D=\Sigma_0,~
 L \in \mathbf{S}^{p}_+,~
 D \in \mathbf{D}^p_{++}.
\end{eqnarray}
As a solution of (\ref{eq:r0 def}), the matrix $L^*$ itself might not be unique, but the optimal value $r_0$ is.

How should one understand this model assumption conceptually?
As our first intuition, the assumption can be viewed as a generalization of the compound symmetry structure 
\[\left[
 \begin{matrix}
  a & b & \cdots & b \\
  b & a & \cdots & b \\
  \vdots  & \vdots  & \ddots & \vdots  \\
  b & b & \cdots & a
 \end{matrix}
 \right]
\]
with $a>b$, which was exploited by \cite{ourQDA} as a special structure to facilitate quadratic discriminant analysis in high dimensions. Notice that covariance matrices having the compound symmetry structure above can be decomposed into a rank-one matrix plus a scaled identity matrix, 
$$b1_p1_p^{\T}+(a-b)I_{p},$$ 
in which $1_p$ is a vector of ones and $I_{p}$ is the $p \times p$ identity matrix. Therefore, the compound symmetry structure can be seen as a special case of the ``low rank + diagonal'' decomposition.

The proposed decomposition also coincides with the factor analysis model and enjoys a nice interpretation. It is equivalent to assuming that the observed random vector $x$ depends on a potentially smaller number of latent factors, i.e., $x=Rz+\epsilon$, in which $z$ is some unobserved $r_0$-dimensional random vector from a normal distribution with mean $0$ and variance $I_{r_0}$, $R$ is an unobserved $p \times r_0$ loading matrix, and $\epsilon$ is a $p$-dimensional vector of independently distributed error terms with zero mean and finite variance, $\cov(\epsilon)=\Psi$. Under the given structure, it is straight-forward to see that $\cov(x)=RR^{\T}+\Psi$, in which $RR^{\T} \in \mathbf{S}^{p,r_0}_{+}$ is a low-rank matrix and $\Psi \in \mathbf{D}^p_{++}$ is a diagonal matrix. For our purpose, we are not interested in estimating the loading matrix or analyzing the latent factors; we merely exploit the special structure to help us estimate $\Sigma_0$. This purely ``utilitarian'' use of the factor model is also the reason why we can define $r_0$ simply as the smallest attainable rank in the ``low-rank + diagonal'' decomposition.

Finally, we can also think of the ``low-rank + diagonal'' assumption as an alternative to the popular sparsity assumption to facilitate the estimation of large covariance matrices. Numerous methods with lasso-type penalties assume a large number of zero off-diagonal entries in $\Sigma_0$; undoubtedly \emph{some} of these sparse structures can be represented as the sum of a low-rank matrix (i.e., with many empty rows and columns) and a diagonal matrix. The rank constraint is also somewhat analogous to the sparsity constraint. Specifically, the rank of $L_{\Sigma_0}$ is the number of its non-zero eigenvalues, so low-rank means its spectrum (i.e., set of eigenvalues) is sparse. Like the sparsity constraint, a rank constraint also reduces the total number of parameters to be estimated, as lower ranks of $L_{\Sigma_0}$ imply more linearly dependent columns and rows in $L_{\Sigma_0}$.

\section{PRECISION MATRIX ESTIMATION WITH FIXED RANK}
\label{sec:FR}
\subsection{The estimation method}
Our main model assumption can be equivalently imposed either on the covariance or on the corresponding precision matrix. Let $\Theta_0=\Sigma_0^{-1}$ be the corresponding precision matrix. To understand the structure of $\Theta_0$ when $\Sigma_0$ has the aforementioned ``low-rank + diagonal'' decomposition, we notice by a result of \cite{henderson1981deriving} that  
\begin{eqnarray}
\label{eq:L+D inv}
\left(L_{\Sigma_0}+D_{\Sigma_0}\right)^{-1}
&=&-D_{\Sigma_0}^{-1}
\left(I_p+L_{\Sigma_0}D_{\Sigma_0}^{-1}\right)^{-1}
L_{\Sigma_0}D_{\Sigma_0}^{-1}+D_{\Sigma_0}^{-1}\nonumber\\
&\triangleq&-L_0+D_0,
\end{eqnarray}
in which $L_0 \in \mathbf{S}^{p,r_0}_+$ and $D_0 \in \mathbf{D}^p_{++}$, because the product of several matrices has rank at most equal to the minimum rank of all the individual matrices in the product, and the inverse of a matrix in $\mathbf{D}^p_{++}$ is still in $\mathbf{D}^p_{++}$. 
Therefore, we see that the precision matrix $\Theta_0$ has an equivalent decomposition. 

With this in mind, we will henceforth concentrate on estimating the precision matrix rather than the covariance matrix. This is in line with various recent literatures on covariance matrix estimation; the precision matrix is also the more ``natural'' variable for maximizing the Gaussian-likelihood and the more ``direct'' quantity to use in many statistical procedures such as discriminant analysis. 
 
Other than the main ``low-rank + diagonal'' condition, our theoretical results will also require a ``bounded eigenvalue'' condition (see Condition \ref{(C.1)} below), which is purely technical but common in the literature. Thus, our entire set of conditions about the population covariance/precision matrix is as follows:
\begin{condition}
\label{(C.1)}
There exist constants $c_1, c_2 > 0$ such that 
$c_1\leq \lambda_{\min}(\Sigma_0) \leq \lambda_{\max}(\Sigma_0) \leq c_2$, or equivalently,
$c_2^{-1}\leq \lambda_{\min}(\Theta_0) \leq \lambda_{\max}(\Theta_0) \leq c_1^{-1}$.
\end{condition}

\begin{condition}
\label{(C.2)}
For some $r_0=o(p)$, the population covariance matrix $\Sigma_0 \in \mathbf{S}_{++}^p$ can be decomposed as $\Sigma_0=L_{\Sigma_0}+D_{\Sigma_0}$ where $L_{\Sigma_0} \in \mathbf{S}^{p,r_0}_+$  and $D_{\Sigma_0} \in \mathbf{D}_{++}^p$; or equivalently, the precision matrix $\Theta_0 \in \mathbf{S}^p_{++}$ can be decomposed as $\Theta_0=-L_0+D_0$, where $L_0 \in \mathbf{S}^{p,r_0}_+$ and $D_0 \in \mathbf{D}_{++}^p$. 
\end{condition}

In this section, we shall first consider a simple version of the problem, in which the rank of $L_0$ is pre-specified. We will consider the more general version of the problem later in Section~\ref{sec:rank-penalty}. One pragmatic reason for first considering a simple (and perhaps somewhat unrealistic) version of the problem is because our main result regarding the more general version and our computational algorithm for solving it are both based on results that we shall derive in this section for the simple version. 

For the simple version, a natural precision matrix estimator is 
\begin{eqnarray}
\label{eq:fr-mle}
(\hat{\Theta}_r,\hat{L}_r,\hat{D}_r)&=&\underset{\Theta}{\argmin}\{\trace(\Theta S)-\log|\Theta|\},\nonumber\\
\text{subject to}&& \Theta=-L+D,~
\Theta \in \mathbf{S}^{p}_{+},~
L \in \mathbf{S}^{p,r}_{+},~
D \in \mathbf{D}^p,
\end{eqnarray}
in which $r$ is a pre-specified constant. The objective function is the negative log-likelihood of the normal distribution, up to a constant. 
Let
\[
\mathbf{F}^r=\{\Theta \in \mathbf{S}_{++}^p \mid L \in \mathbf{S}^{p,r}_{+}, D \in \mathbf{D}_{++}^p \text{ and } \Theta = -L+D\}
\]
denote the search space of the optimization problem given in (\ref{eq:fr-mle}). 
In Sections~\ref{sec:FR-conservative} and \ref{sec:FR-aggressive} below, we will establish theoretical results to the following effects:
(i) if the pre-specified constant $r \geq r_0$, then the true precision matrix $\Theta_0 \in \mathbf{F}^r$, but if $r$ is much larger than $r_0$, the search space can be ``too large'' and solving (\ref{eq:fr-mle}) will become inefficient for estimating $\Theta_0$;
(ii) if the pre-specified constant $r < r_0$, then $\Theta_0 \notin \mathbf{F}^r$, and the gap between $\hat{\Theta}_r$ and $\Theta_0$ will depend on the distance between $\Theta_0$ and the search space $\mathbf{F}^r$. 

\noindent\textbf{Remark 1} \textit{In (\ref{eq:fr-mle}), it is unnecessary to explicitly restrict $\Theta$ or $D$ to be positive definite. The $-\log|\Theta|$ term in the objective function and the constraint $\Theta \in \mathbf{S}_+^p$ together will guarantee $\Theta \in \mathbf{S}_{++}^p$. In addition, as $\Theta = -L+D$ and $L \in \mathbf{S}_{+}^{p,r}$, we will also automatically have $D \in \mathbf{D}_{++}^p$, for $\Theta$ may not be in $\mathbf{S}_{++}^p$ otherwise.}

\noindent\textbf{Remark 2} \textit{The non-uniqueness of $\hat{L}_r$ and $\hat{D}_r$ is inconsequential for our purposes; our results and discussions below only depend on $\hat{\Theta}_r$ being a feasible minimizing solution to (\ref{eq:fr-mle}).}

\subsection{The conservative case: $r\geq r_0$}
\label{sec:FR-conservative}

To pre-specify the rank of $L_0$, denoted by $r$, it is generally advisable to err on the conservative side by choosing it to be large enough so that one can be more or less sure that $r \geq r_0$.
\begin{thm}
	Under Conditions \ref{(C.1)} and \ref{(C.2)}, if $r \geq r_0$ and $\hat{\Theta}_r$ is a solution of (\ref{eq:fr-mle}), then
\begin{align*}
\|\hat{\Theta}_r-\Theta_0\|_F=O_p\left\{\max(a_{n,p,r},b_{n,p})\right\},
\end{align*}
in which 
\[
a_{n,p,r}=r^{1/2}(p/n)^{1/2},\quad
b_{n,p\phantom{,r}}=\left\{(p\log p)/n\right\}^{1/2}.
\]
\label{thm:fr-Consistency}
\end{thm}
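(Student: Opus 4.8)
The plan is to control the estimation error $\hat{\Delta}_r=\hat{\Theta}_r-\Theta_0$ through the optimality of $\hat{\Theta}_r$ in \eqref{eq:fr-mle}. Since $r\ge r_0$, Condition~\ref{(C.2)} guarantees $\Theta_0\in\mathbf{F}^r$, so $\Theta_0$ is itself feasible and $\trace(\hat{\Theta}_r S)-\log|\hat{\Theta}_r|\le\trace(\Theta_0 S)-\log|\Theta_0|$. Writing $W=S-\Sigma_0$ and expanding the concave map $\Theta\mapsto\log|\Theta|$ about $\Theta_0$ with an exact integral remainder, this rearranges into the ``basic inequality''
\[
R(\hat{\Delta}_r)\le-\trace(\hat{\Delta}_r W)\le|\trace(\hat{\Delta}_r W)|,
\]
where $R(\Delta)=\int_0^1(1-v)\,\trace\{(\Theta_0+v\Delta)^{-1}\Delta(\Theta_0+v\Delta)^{-1}\Delta\}\,dv\ge0$ is the second-order curvature term. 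Writing $A_v=(\Theta_0+v\Delta)^{-1}$ and using $\trace(A_v\Delta A_v\Delta)=\|A_v^{1/2}\Delta A_v^{1/2}\|_F^2\ge\lambda_{\min}(A_v)^2\|\Delta\|_F^2$, one gets $R(\hat{\Delta}_r)\ge c\|\hat{\Delta}_r\|_F^2$ for a constant $c>0$ depending only on $c_1$ in Condition~\ref{(C.1)}, \emph{provided} $\|\hat{\Delta}_r\|_{op}$ stays bounded. To avoid the circularity in this proviso I would localize by a sphere argument: using the convexity of $G(\Delta)=\trace(\Delta W)+R(\Delta)$ on $\mathbf{S}^p_{++}$, set $r_n=M\max(a_{n,p,r},b_{n,p})$, show $G>0$ on $\{\|\Delta\|_F=r_n\}$ along the structured directions below once $M$ is large, and conclude from $G(0)=0$, $G(\hat{\Delta}_r)\le0$ and convexity that $\|\hat{\Delta}_r\|_F\le r_n$ with high probability.

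The heart of the argument is then the upper bound on $|\trace(\hat{\Delta}_r W)|$, which must split into the two stated rates. Using the decomposition of Condition~\ref{(C.2)}, write $\hat{\Delta}_r=-\Delta_L+\Delta_D$ with $\Delta_L=\hat{L}_r-L_0$ of rank at most $2r$ and $\Delta_D=\hat{D}_r-D_0$ diagonal. For the diagonal part, $|\trace(\Delta_D W)|\le\|\Delta_D\|_F\{\sum_i W_{ii}^2\}^{1/2}\le\|\Delta_D\|_F\,p^{1/2}\max_i|W_{ii}|$, and the sub-exponential concentration of the sample variances gives $\max_i|W_{ii}|=O_p(\{(\log p)/n\}^{1/2})$, producing the $b_{n,p}$ rate. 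For the low-rank part, trace duality gives $|\trace(\Delta_L W)|\le\|\Delta_L\|_*\|W\|_{op}\le(2r)^{1/2}\|\Delta_L\|_F\|W\|_{op}$, and under Condition~\ref{(C.1)} the Gaussian operator-norm bound $\|W\|_{op}=O_p((p/n)^{1/2})$ produces the $a_{n,p,r}$ rate. Combined with the curvature lower bound this yields $c\|\hat{\Delta}_r\|_F^2\le O_p(a_{n,p,r})\|\Delta_L\|_F+O_p(b_{n,p})\|\Delta_D\|_F$.

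The step I expect to be the main obstacle is closing this last inequality, i.e.\ showing $\|\Delta_L\|_F$ and $\|\Delta_D\|_F$ are themselves $O_p(\|\hat{\Delta}_r\|_F)$, for only then does the display collapse to $\|\hat{\Delta}_r\|_F=O_p(\max(a_{n,p,r},b_{n,p}))$. The off-diagonal parts are harmless, since the off-diagonal of $\hat{\Delta}_r$ equals that of $-\Delta_L$, giving $\|(\Delta_L)_{\mathrm{off}}\|_F\le\|\hat{\Delta}_r\|_F$. The difficulty is the \emph{diagonal} of $\Delta_L$: because the ``low-rank $+$ diagonal'' decomposition is non-unique and the off-diagonal of a low-rank matrix need not be low-rank, $\diag(\Delta_L)$ is not controlled by $\|\hat{\Delta}_r\|_F$ through structure alone. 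The crude positive-semidefinite/rank bound $\sum_i|(\Delta_L)_{ii}|\le\trace(\hat{L}_r)+\trace(L_0)\le r(\|\hat{L}_r\|_{op}+\|L_0\|_{op})$ only yields $\|\diag(\Delta_L)\|_F=O_p(r^{1/2})$, which fed back produces a strictly slower rate than claimed. Overcoming this requires exploiting the \emph{optimality} of $\hat{L}_r$ rather than treating it as an arbitrary rank-$r$ positive semi-definite matrix---in particular its closed-form/stationarity characterization (so that, for instance, $\diag(\hat{\Theta}_r^{-1})=\diag(S)$) together with the boundedness of $\|\hat{L}_r\|_{op}$ and $\|L_0\|_{op}$ implied by Condition~\ref{(C.1)}---to show that the fitted low-rank factor tracks $L_0$ at the same order as the overall error $\|\hat{\Delta}_r\|_F$ (this same structured control is what the localization in the first paragraph also needs). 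This perturbation analysis of $\hat{L}_r$ is where I expect the real work to lie; by comparison, the two concentration inputs and the log-determinant curvature bound are routine.
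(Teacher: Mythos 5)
Your skeleton coincides with the paper's proof of Theorem~\ref{thm:fr-Consistency}: a Rothman-style localization on the Frobenius sphere of radius $M\max(a_{n,p,r},b_{n,p})$, a convexity/contradiction argument to reduce to positivity of $F(\Delta)$ on that sphere, the log-determinant curvature bound $F(\Delta)\ge C\|\Delta\|_F^2+\trace\{\Delta(S-\Sigma_0)\}$, and the same two concentration inputs, $\max_j|s_{jj}-\sigma_{0jj}|=O_p\{(\log p/n)^{1/2}\}$ and $\|S-\Sigma_0\|_{op}=O_p\{(p/n)^{1/2}\}$, split across the diagonal and low-rank parts by trace duality. (The operator-norm cap you worry about is supplied in the paper by its Lemma~1, which shows $\|\hat\Theta_r-\Theta_0\|_{op}\le C_1$ uniformly in $r$ via a variational argument, so the circularity you note is handled there rather than by shrinking the sphere.) However, the step you flag as ``the main obstacle'' is a genuine gap in your proposal, and the route you sketch for closing it is the wrong one. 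The missing idea is that $\trace\{\Delta(S-\Sigma_0)\}$ depends only on $\Delta$, not on how $\Delta$ is split, so you are free to \emph{re-choose} the decomposition before applying duality: you never need $\hat L_r-L_0$ and $\hat D_r-D_0$ themselves to be Frobenius-controlled. The paper's Lemma~2 supplies exactly this, deterministically: any $M=L+D$ with $\rank(L)\le r$ and $D$ diagonal can be rewritten as $M=L'+D'$ with $\rank(L')\le 3r$, $D'$ diagonal, and $\|M\|_F^2\ge \tfrac{1}{8}\left(\|L'\|_F^2+\|D'\|_F^2\right)$. The mechanism is its Lemma~3: a rank-$r$ matrix has at most $2r-1$ columns satisfying $L_{jj}^2>\sum_{i\ne j}L_{ij}^2$ (an orthonormal-basis counting argument, since $\sum_i\|b_i\|^2=r$ allows at most $2r-1$ rows with $\|b_j\|^2>1/2$); on all other columns the diagonal of $L$ is dominated by off-diagonal entries of $M$ itself (visible because $D$ is diagonal), and on the few bad columns one simply reassigns the diagonal, $L'_{jj}=M_{jj}/2$, inflating the rank bound by at most $2r-1$. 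The resulting constant-factor rank inflation ($3r$, or $9r$ once the convex-combination step placing sphere points in $\mathbf{E}^{2r}$ is accounted for) only changes constants inside $a_{n,p,r}$.

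Your proposed fix---exploiting optimality or stationarity of $\hat L_r$, e.g.\ $\diag(\hat\Theta_r^{-1})=\diag(S)$---would not work even within your own plan, because it is incompatible with your localization step: the sphere argument requires $F(\Delta)>0$ for \emph{every} structured $\Delta$ of the prescribed Frobenius norm, and those $\Delta$ are generic elements of $\mathbf{E}^{2r}$ carrying no optimality property whatsoever, a tension you half-notice when you remark that ``this same structured control is what the localization in the first paragraph also needs.'' The correct resolution is purely structural, requires no perturbation analysis of $\hat L_r$ at all, and once Lemma~2 is available your final display $c\|\hat\Delta_r\|_F^2\le O_p(a_{n,p,r})\|L_{\Delta}\|_F+O_p(b_{n,p})\|D_{\Delta}\|_F$ closes immediately, since $\|L_{\Delta}\|_F+\|D_{\Delta}\|_F\lesssim\|\hat\Delta_r\|_F$, giving the claimed rate; everything else in your proposal (basic inequality, curvature bound, dual-norm split, concentration) matches the paper's proof of Theorem~\ref{thm:fr-Consistency}.
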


The true rank, $r_0$, may be fixed and finite, or it may diverge to infinity with $p$ and $n$. Since Theorem~\ref{thm:fr-Consistency} concerns the case of $r \geq r_0$, if $r_0 \to \infty$, then $r$ must necessarily also go to infinity. Hence, finite choices of $r \geq r_0$ are only possible if $r_0$ is also finite. If $r_0$ is finite and we choose a finite $r \geq r_0$, the consistency of $\hat{\Theta}_r$ is driven by $b_{n,p}$, whose order is greater than that of $a_{n,p,r}$, and the theorem basically suggests that choosing $r \geq r_0$ conservatively will not hurt estimation in any fundamental way. Otherwise if we must choose a diverging $r$, it becomes possible for the convergence rate to be driven by $a_{n,p,r}$, and the theorem basically implies that the estimator $\hat{\Theta}_r$ will be less efficient for larger, more conservative, choices of $r$.

\subsection{The aggressive case: $r<r_0$}
\label{sec:FR-aggressive}


What if one errs on the aggressive side by choosing $r$ to be too small so that $r < r_0$?
Let
\begin{eqnarray}
d_{r,r_0}&=& \underset{\Theta \in \mathbf{F}^r}{\min}||{\Theta-\Theta_0}||_F \nonumber
\end{eqnarray}
be the distance from $\Theta_0$ to the search space $\mathbf{F}^r$. When $r \geq r_0$, $d_{r,r_0}=0$. When $r<r_0$, the true precision matrix $\Theta_0$ is no longer in the search space $\mathbf{F}^r$, and $d_{r,r_0}>0$. Under such circumstances, it is still possible to achieve the same level of performance provided that $d_{r,r_0}$ is not too large.

\begin{thm}
Under Conditions \ref{(C.1)} and \ref{(C.2)}, if $r<r_0$, $d_{r,r_0}=O\{\max(a_{n,p,r_0},b_{n,p})\}$, and $\hat{\Theta}_r$ is a solution of (\ref{eq:fr-mle}), then
\begin{align*}
\|\hat{\Theta}_r-\Theta_0\|_F=O_p\left\{\max(a_{n,p,r_0},b_{n,p})\right\},
\end{align*}
in which 
\[
a_{n,p,r_0}=r_0^{1/2}(p/n)^{1/2},\quad
b_{n,p\phantom{,r_0}}=\left\{(p\log p)/n\right\}^{1/2}. 
\]
\label{thm:r<r0 consistency}
\end{thm}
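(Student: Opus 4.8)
The plan is to reduce the aggressive case to the kind of argument behind Theorem~\ref{thm:fr-Consistency}, but re-centered at the best approximation of $\Theta_0$ inside the now-deficient search space. Concretely, let $\Theta^{*}\in\mathbf{F}^{r}$ attain (or nearly attain) the distance $d_{r,r_0}=\min_{\Theta\in\mathbf{F}^{r}}\|\Theta-\Theta_0\|_F$, so that $\|\Theta^{*}-\Theta_0\|_F=d_{r,r_0}$ and $\Theta^{*}=-L^{*}+D^{*}$ with $\rank(L^{*})\le r$. Writing $\rho_n:=\max(a_{n,p,r_0},b_{n,p})$, the hypothesis $d_{r,r_0}=O(\rho_n)\to 0$ gives $\|\Theta^{*}-\Theta_0\|_{op}\to 0$, so by Weyl's inequality Condition~\ref{(C.1)} transfers to $\Theta^{*}$: for $n$ large its eigenvalues lie in a fixed compact subinterval of $(0,\infty)$, and in particular $\Theta^{*}\in\mathbf{S}_{++}^{p}$. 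By the triangle inequality $\|\hat\Theta_r-\Theta_0\|_F\le\|\hat\Theta_r-\Theta^{*}\|_F+d_{r,r_0}$, and since the second term is $O(\rho_n)$ by assumption, it suffices to prove $\|\Delta\|_F=O_p(\rho_n)$ for $\Delta:=\hat\Theta_r-\Theta^{*}$.

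Writing $Q(\Theta)=\trace(\Theta S)-\log|\Theta|$, optimality of $\hat\Theta_r$ over $\mathbf{F}^{r}$ together with $\Theta^{*}\in\mathbf{F}^{r}$ gives $Q(\hat\Theta_r)\le Q(\Theta^{*})$. Expanding the log-determinant about $\Theta^{*}$ yields
\[
Q(\Theta^{*}+\Delta)-Q(\Theta^{*})=\trace\!\big\{\Delta\,(S-(\Theta^{*})^{-1})\big\}+B(\Theta^{*}+\Delta,\Theta^{*})\le 0 ,
\]
where $B(\cdot,\cdot)\ge 0$ is the Bregman remainder of $-\log|\cdot|$. Because $\Theta^{*}$ has eigenvalues bounded away from $0$ and $\infty$, the standard integral lower bound gives $B(\Theta^{*}+\Delta,\Theta^{*})\ge c\,\|\Delta\|_F^{2}$ on a fixed-radius ball around $\Theta^{*}$. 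The remaining task is to bound the linear term from below, which I would do by splitting the score into a stochastic and a deterministic piece, $S-(\Theta^{*})^{-1}=(S-\Sigma_0)+(\Sigma_0-(\Theta^{*})^{-1})$.

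For the stochastic piece I would invoke the same estimate as in the proof of Theorem~\ref{thm:fr-Consistency}: decompose $\Delta=-(\hat L-L^{*})+(\hat D-D^{*})$, control the low-rank part $\hat L-L^{*}$ (of rank at most $2r$) through the trace inequality $|\trace\{(\hat L-L^{*})(S-\Sigma_0)\}|\le\|\hat L-L^{*}\|_{*}\,\|S-\Sigma_0\|_{op}$ with $\|\hat L-L^{*}\|_{*}\le(2r)^{1/2}\|\hat L-L^{*}\|_F$ and $\|S-\Sigma_0\|_{op}=O_p\{(p/n)^{1/2}\}$, and control the diagonal part through $\max_i|S_{ii}-\Sigma_{0,ii}|=O_p\{(\log p/n)^{1/2}\}$ together with $\|\hat D-D^{*}\|_{1}\le p^{1/2}\|\hat D-D^{*}\|_F$. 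This yields $|\trace\{\Delta(S-\Sigma_0)\}|=O_p\{\max(a_{n,p,r},b_{n,p})\}\,\|\Delta\|_F$; since $r<r_0$ implies $a_{n,p,r}\le a_{n,p,r_0}$, this piece is $O_p(\rho_n)\|\Delta\|_F$. For the deterministic piece, Lipschitz continuity of matrix inversion under bounded eigenvalues gives
\[
\|\Sigma_0-(\Theta^{*})^{-1}\|_F=\|\Theta_0^{-1}-(\Theta^{*})^{-1}\|_F\le\|\Theta_0^{-1}\|_{op}\,\|(\Theta^{*})^{-1}\|_{op}\,\|\Theta_0-\Theta^{*}\|_F\le C\,d_{r,r_0},
\]
so that $|\trace\{\Delta(\Sigma_0-(\Theta^{*})^{-1})\}|\le\|\Delta\|_F\,C d_{r,r_0}=O(\rho_n)\|\Delta\|_F$ by hypothesis. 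Hence the linear term is bounded below by $-O_p(\rho_n)\|\Delta\|_F$.

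Combining, $0\ge c\|\Delta\|_F^{2}-O_p(\rho_n)\|\Delta\|_F$, which forces $\|\Delta\|_F=O_p(\rho_n)$ and therefore $\|\hat\Theta_r-\Theta_0\|_F=O_p(\rho_n)$. To make this rigorous despite $\mathbf{F}^{r}$ being non-convex, I would run the usual device on the sphere: if $Q(\Theta^{*}+\Delta)-Q(\Theta^{*})>0$ for every $\Delta$ with $\|\Delta\|_F=M\rho_n$ (guaranteed by the display above for $M$ large), then convexity of $Q$ along the segment from $\Theta^{*}$ to $\hat\Theta_r$, combined with $Q(\hat\Theta_r)\le Q(\Theta^{*})$, rules out $\|\hat\Theta_r-\Theta^{*}\|_F>M\rho_n$; this step uses only convexity of the objective, not of the feasible set. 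The main obstacle is the third paragraph: because $\Theta_0\notin\mathbf{F}^{r}$, the score $S-(\Theta^{*})^{-1}$ is no longer mean-zero, and the crux is to show that its deterministic (bias) part enters at order $d_{r,r_0}$ and no worse. This is exactly where the assumption $d_{r,r_0}=O(\rho_n)$ is consumed; note moreover that the stochastic part actually improves relative to Theorem~\ref{thm:fr-Consistency} (rank $2r<2r_0$), so the appearance of $r_0$ in the final rate is attributable entirely to the approximation error, not to the sampling fluctuation.
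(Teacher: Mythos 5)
Your architecture coincides with the paper's own proof almost step for step: recenter at the best approximation $\Theta_r\in\mathbf{F}^r$ (your $\Theta^{*}$), reduce via the triangle inequality to bounding $\|\hat\Theta_r-\Theta_r\|_F$, lower-bound the objective difference by a quadratic curvature term plus the score $\trace\{\Delta(S-\Theta_r^{-1})\}$, split the score into the stochastic piece $S-\Sigma_0$ and the bias piece $\Sigma_0-\Theta_r^{-1}$ with the bias controlled by Lipschitz continuity of inversion ($\|\Sigma_r-\Sigma_0\|_F\leq C\|\Theta_r-\Theta_0\|_F\leq Cd_{r,r_0}$, which the paper obtains by Taylor expansion), and finish with the sphere-plus-convexity device over differences of $\mathbf{F}^{2r}$ matrices. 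All of that is exactly what the paper does.

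There is, however, one genuine gap, and it sits at the only technically delicate point of the stochastic bound. You write $\Delta=-(\hat L-L^{*})+(\hat D-D^{*})$, bound the two trace terms by $(2r)^{1/2}\|\hat L-L^{*}\|_F\|S-\Sigma_0\|_{op}$ and $p^{1/2}\|\hat D-D^{*}\|_F\max_j|s_{jj}-\sigma_{0jj}|$, and then conclude a bound of order $\rho_n\|\Delta\|_F$. That last step silently assumes $\|\hat L-L^{*}\|_F+\|\hat D-D^{*}\|_F\leq C\|\Delta\|_F$, which is \emph{false} in general: the ``low-rank $+$ diagonal'' decomposition is not orthogonal, and the components can cancel on the diagonal. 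For instance, with $L'=t\,e_1e_1^{\T}$ and $D'=t\,\diag(1,0,\ldots,0)$ one has $-L'+D'=0$ while $\|L'\|_F=\|D'\|_F=t$ is arbitrary. The paper closes exactly this hole with its Lemma 2 (resting on Lemma 3, which shows a rank-$r$ symmetric matrix has at most $2r-1$ columns satisfying $a_{jj}^2>\sum_{i\neq j}a_{ij}^2$): one re-attributes the offending diagonal entries between the two components, obtaining a new decomposition $\Delta=L_{\Delta}+D_{\Delta}$ with the rank inflated by a constant factor (e.g., $\mathbf{S}^{p,3\cdot 2r}$ in place of $\mathbf{S}^{p,2r}$ --- harmless for the rate, since it only changes $a_{n,p,r}$ by a constant) and the crucial norm compatibility $\|\Delta\|_F^2\geq C(\|L_{\Delta}\|_F^2+\|D_{\Delta}\|_F^2)$. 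Without this re-balancing lemma your chain of inequalities does not go through. A second, more minor omission: your curvature bound needs $\|\Delta\|_{op}$ bounded, which your sphere-radius argument ($\|\Delta\|_{op}\leq\|\Delta\|_F=M\rho_n$) delivers only when $\rho_n\to 0$; the paper instead proves this uniformly via its Lemma 1, showing $\lambda_{\max}(\hat\Theta_r)\leq\lambda_{\max}(S^{-1})$ so that $\|\hat\Theta_r-\Theta_r\|_{op}\leq 2C_1$ with probability tending to one.
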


While the proof itself is given in the appendices, the main reason why Theorem~\ref{thm:r<r0 consistency} holds is as follows. Let $\Theta_r \in \mathbf{F}^r$ be the matrix closest to $\Theta_{0}$ such that $\|\Theta_r-\Theta_0\|_F=d_{r,r_0}$. It can be shown that $\hat{\Theta}_r$, as the solution to maximizing the likelihood function in the search space $\mathbf{F}^r$, will be close to $\Theta_r$. So, if $d_{r,r_0}$ is small, $\hat{\Theta}_r$ will also be reasonably close to $\Theta_0$. More importantly, the condition $d_{r,r_0}=O\{\max(a_{n,p,r_0},b_{n,p})\}$ requires the distance $d_{r,r_0}$ to be of order $\max(a_{n,p,r_0},b_{n,p})$, which, by Theorem~\ref{thm:fr-Consistency}, is also the order of the estimation error when the rank $r$ is correctly set to be $r_0$. As a result, the error caused by $\Theta_0$ being away from $\mathbf{F}^r$ is relatively small and does not increase the order of the estimation error. 


However, by definition $d_{r,r_0}$ is also a lower bound for the estimation error,
\begin{eqnarray}
\|\hat{\Theta}_r-\Theta_0\|_F& \geq & d_{r,r_0}, \nonumber
\end{eqnarray}
which means, not surprisingly, that $\hat{\Theta}_r$ will cease to be a consistent estimator of $\Theta_0$ if $d_{r,r_0}$ is large. 

\subsection{Discussion}

To summarize what we have presented so far, although the optimization problem (\ref{eq:fr-mle}) is straight-forward and easy to implement (see Section~\ref{sec:algorithm}), it is generally not possible to specify $r$ accurately. An inaccurate choice of $r$ can be harmful in two ways: (1) A conservative choice of $r>r_0$ 
leads to slower convergence and less estimation efficiency. (2) An aggressive choice of $r<r_0$ can ruin the consistency of $\hat{\Theta}_r$, because it can enlarge the distance between $\Theta_0$ and the search space $\mathbf{F}^r$. 

In the next section, we introduce a rank penalty to circumvent these problems. However, our main result below (Theorem~\ref{thm:FR-penalty-consistency}) as well as the main computational algorithm (Section~\ref{sec:algorithm}) are both heavily based on the results (Theorems~\ref{thm:fr-Consistency} and \ref{thm:r<r0 consistency}) that we have obtained so far in this section.

\section{PRECISION MATRIX ESTIMATION WITH RANK PENALTY}
\label{sec:rank-penalty}
\subsection{The estimation method}
One way to avoid having to specify the rank of the low-rank component $L$ is by adding a penalty on the rank of $L$ to the objective function in (\ref{eq:fr-mle}).
That is, instead of (\ref{eq:fr-mle}), we can solve the following optimization problem:
\begin{eqnarray}
\label{eq:rank-MLE}
(\hat{\Theta},\hat{L},\hat{D})&=&\underset{\Theta,L,D}{\argmin}\left[\trace(\Theta S)-\log|\Theta|+\tau\{\rank(L)\}\right],\nonumber\\
\text{subject to}&& -L+D=\Theta,~
\Theta \in \mathbf{S}^{p}_{+},~
L \in \mathbf{S}^{p}_{+},~
D \in \mathbf{D}^p,
\end{eqnarray}
where $\tau(\cdot)$ is a monotonically increasing penalty function. 

In the literature, it is popular to impose rank restrictions on a matrix by penalizing its nuclear norm. There are some advantages to directly penalizing its rank. Let $\hat{\Theta}_r$ denote the solution to (\ref{eq:fr-mle}). Clearly, if we fix $\rank(L)=r$ in (\ref{eq:rank-MLE}), its solution becomes $\hat{\Theta}=\hat{\Theta}_r$. This means $\hat{\Theta}$ can only be one of $\{\hat{\Theta}_r \mid r=1,\dots,p\}$, which will have a direct implication on how (\ref{eq:rank-MLE}) can be solved in practice. In particular, we shall see in Section~\ref{sec:algorithm} below that, for fixed $r$, $\hat{\Theta}_r$ can be obtained by a relatively efficient blockwise coordinate descent algorithm, in which the update of $L$ given $D$ can be achieved analytically, and the update of $D$ given $L$ is a relatively cheap log-determinant semi-definite program.  

In this section, however, we shall concentrate on the key question of how to choose the penalty function $\tau(\cdot)$ in order to ensure that $\hat{\Theta}$ is a good estimator of $\Theta_0$. Our answer is that it must satisfy the following two conditions:
\begin{condition}
\label{(C.3)}
If $r<r_0$ and $d_{r,r_0}/\max(a_{n,p,r_0},b_{n,p}) \to \infty$, then $|\tau(r)-\tau(r_0)|/d_{r,r_0}^2 \to 0$.
\end{condition}
\begin{condition}
\label{(C.4)}
If $r>r_0$ and $r/\max(r_0,\log p) \to \infty$, then $a_{n,p,r}^2/|\tau(r)-\tau(r_0)| \to 0$.
\end{condition}
These conditions are quite technical, and readers will find a concrete example of $\tau(\cdot)$, to be provided later in Section~\ref{sec:rank-penalty-example}, much easier to grasp. Our main result is that, with a penalty function that satisfies Conditions \ref{(C.3)} and \ref{(C.4)}, the solution of (\ref{eq:rank-MLE}) will be a good estimator of $\Theta_0$.
\begin{thm}
Under Conditions \ref{(C.1)}, \ref{(C.2)}, \ref{(C.3)} and \ref{(C.4)}, if $\hat{\Theta}$ is a solution of (\ref{eq:rank-MLE}), then
\begin{align*}
\|\hat{\Theta}-\Theta_0\|_F=O_p\left\{\max(a_{n,p,r_0},b_{n,p})\right\},
\end{align*}
in which 
\[
a_{n,p,r_0}=r_0^{1/2}(p/n)^{1/2},\quad
b_{n,p\phantom{,r_0}}=\left\{(p\log p)/n\right\}^{1/2}.
\]
\label{thm:FR-penalty-consistency}
\end{thm}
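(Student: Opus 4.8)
The plan is to exploit the discrete structure already noted in the text: since fixing $\rank(L)=r$ in \eqref{eq:rank-MLE} returns the fixed-rank minimizer $\hat{\Theta}_r$ of \eqref{eq:fr-mle}, the penalized estimator must coincide with $\hat{\Theta}_{\hat{r}}$ for the data-driven rank $\hat{r}=\rank(\hat{L})$ that minimizes $G(r):=\ell(\hat{\Theta}_r)+\tau(r)$ over $r\in\{1,\dots,p\}$, where $\ell(\Theta)=\trace(\Theta S)-\log|\Theta|$. The whole problem therefore reduces to a model-selection argument: show that $\hat{r}$ is never selected in either of two ``bad'' regimes, and then let Theorems~\ref{thm:fr-Consistency} and \ref{thm:r<r0 consistency} supply the rate on the complementary good event.

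First I would record the \emph{basic inequality}. Because $(\Theta_0,L_0,D_0)$ is feasible for \eqref{eq:rank-MLE} with $\rank(L_0)=r_0$, optimality of $\hat{\Theta}$ yields $\ell(\hat{\Theta})-\ell(\Theta_0)\le\tau(r_0)-\tau(\hat{r})$. Writing $E=S-\Sigma_0$ and introducing the log-determinant Bregman divergence $B(\Theta,\Theta_0)=-\log|\Theta|+\log|\Theta_0|+\trace\{\Sigma_0(\Theta-\Theta_0)\}\ge 0$, this becomes
\[
B(\hat{\Theta},\Theta_0)+\trace\{(\hat{\Theta}-\Theta_0)E\}\le\tau(r_0)-\tau(\hat{r}).
\]
Under Condition~\ref{(C.1)} the Hessian of $-\log|\cdot|$ is bounded above and below, so $B(\hat{\Theta},\Theta_0)\ge c\,t^2$ with $t:=\|\hat{\Theta}-\Theta_0\|_F$ and some $c>0$, turning the display into a quadratic inequality in $t$. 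The key auxiliary input, imported from the proofs of the fixed-rank theorems, is a uniform control of the stochastic term: decomposing $\hat{\Theta}-\Theta_0$ into its low-rank part (of rank at most $\hat{r}+r_0$) and its diagonal part and bounding them through $\|E\|_{op}=O_p\{(p/n)^{1/2}\}$ and $\max_i|E_{ii}|$ respectively gives $|\trace\{(\hat{\Theta}-\Theta_0)E\}|\le t\cdot O_p\{\max(a_{n,p,\hat{r}},b_{n,p})\}$, the $\log p$ factor in $b_{n,p}$ arising from the maximum of the $p$ diagonal deviations.

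I would then split on the sign of $\hat{r}-r_0$. In the overselection regime $\hat{r}\ge r_0$ the penalty difference is nonpositive, and an elementary Young-type step $AB\le\tfrac{c}{2}A^2+\tfrac{1}{2c}B^2$ reduces the basic inequality to $\tau(\hat{r})-\tau(r_0)\le\frac{1}{2c}\max(a_{n,p,\hat{r}}^2,b_{n,p}^2)$; whenever $\hat{r}\ge\log p$ this forces $a_{n,p,\hat{r}}^2\gtrsim\tau(\hat{r})-\tau(r_0)$, which contradicts Condition~\ref{(C.4)} unless $\hat{r}=O_p\{\max(r_0,\log p)\}$. Consequently $a_{n,p,\hat{r}}^2=\hat{r}(p/n)=O\{\max(a_{n,p,r_0}^2,b_{n,p}^2)\}$, so Theorem~\ref{thm:fr-Consistency} (applied at the admissible rank $\hat{r}$) delivers $t=O_p\{\max(a_{n,p,r_0},b_{n,p})\}$. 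In the underselection regime $\hat{r}<r_0$ I would use $t\ge d_{\hat{r},r_0}$: if $d_{\hat{r},r_0}/\max(a_{n,p,r_0},b_{n,p})\to\infty$ then Condition~\ref{(C.3)} makes $\tau(r_0)-\tau(\hat{r})=o(d_{\hat{r},r_0}^2)$ negligible against $c\,t^2$, and the basic inequality collapses to $t=O\{\max(a_{n,p,r_0},b_{n,p})\}$, contradicting $t\ge d_{\hat{r},r_0}$; hence $d_{\hat{r},r_0}=O_p\{\max(a_{n,p,r_0},b_{n,p})\}$ and Theorem~\ref{thm:r<r0 consistency} yields the same rate.

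The step I expect to be the main obstacle is making the stochastic-term control genuinely uniform over the \emph{random} rank $\hat{r}$. Theorems~\ref{thm:fr-Consistency} and \ref{thm:r<r0 consistency} are stated for a fixed, deterministic $r$, whereas here $\hat{r}$ is selected by the data, so I must bound $\sup_{\Theta\in\mathbf{F}^r}|\trace\{(\Theta-\Theta_0)E\}|/\|\Theta-\Theta_0\|_F$ simultaneously across all $r$ in the relevant range (an empirical-process/covering argument over low-rank-plus-diagonal differences) and then verify that Conditions~\ref{(C.3)} and \ref{(C.4)} are calibrated precisely so that both tail regimes are excluded with probability tending to one. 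Matching the quadratic likelihood behaviour to $d_{r,r_0}^2$ on the underselection side and to $a_{n,p,r}^2$ on the overselection side is exactly where those two technical conditions earn their keep.
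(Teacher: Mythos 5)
Your proposal is correct and takes essentially the same route as the paper's own proof: reduce the penalized problem to rank selection over $\{\hat{\Theta}_r\}$, lower-bound the likelihood gap against $\Theta_0$ by a quadratic $C\|\hat{\Delta}_r\|_F^2$ minus a stochastic trace term of size $\|\hat{\Delta}_r\|_F\max(a_{n,p,r+r_0},b_{n,p})$ (built from $\|S-\Sigma_0\|_{op}=O_p\{(p/n)^{1/2}\}$ and $\max_j|s_{jj}-\sigma_{0jj}|=O_p\{(\log p/n)^{1/2}\}$), then use Condition~\ref{(C.3)} together with $\|\hat{\Delta}_r\|_F\ge d_{r,r_0}$ to exclude $\mathbf{A}_1$ and Condition~\ref{(C.4)} together with minimizing the quadratic in $\|\hat{\Delta}_r\|_F$ (the paper's analogue of your Young-inequality step) to exclude $\mathbf{A}_4$, finally invoking Theorems~\ref{thm:fr-Consistency} and~\ref{thm:r<r0 consistency} on the surviving ranks. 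Two remarks on the points you flag: the uniformity-over-random-$\hat{r}$ obstacle dissolves in the paper's treatment because the only probabilistic inputs (Lemma~1's bound $\|\hat{\Theta}_r-\Theta_0\|_{op}\le C_1$ holding for all $r=1,\dots,p$ simultaneously, plus the two deviation bounds on $S$) hold on a single rank-independent event, after which the exclusion of every $r\in\mathbf{A}_1\cup\mathbf{A}_4$ is deterministic and no covering argument is needed; and the step you gloss over --- bounding $\|L_{\Delta}\|_F$ and $\|D_{\Delta}\|_F$ by $\|\Delta\|_F$, which fails for the naive split $-(\hat{L}_r-L_0)+(\hat{D}_r-D_0)$ because of possible cancellation on the diagonal --- is exactly what the paper's Lemma~2 supplies, by redistributing diagonal entries at the cost of inflating the rank to $3(r+r_0)$.
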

Comparing the conclusion of Theorem~\ref{thm:FR-penalty-consistency} with that of Theorem~\ref{thm:fr-Consistency}, we can see that the convergence rates of the two methods, whether using a penalty on $\rank(L)$ or a pre-specified rank for $L$, are similar. The only difference is that the convergence rate of the former depends on the true rank $r_0$, as long as the penalty function $\tau(\cdot)$ is chosen appropriately, while the convergence rate of the latter depends on the presumed rank $r$. 


\subsection{Technical conditions on the penalty function}
\label{sec:rank-penalty-general}

To understand Conditions \ref{(C.3)} and \ref{(C.4)}, and how they are essential to Theorem~\ref{thm:FR-penalty-consistency}, let us partition the set $\{r \mid r \neq r_0\}$ into four disjoint pieces:
\begin{eqnarray*}
\mathbf{A}_1 &=& \{r \mid r<r_0, d_{r,r_0}/\max(a_{n,p,r_0},b_{n,p}) \to \infty\},\\
\mathbf{A}_2 &=& \{r \mid r<r_0, d_{r,r_0}=O[\max(a_{n,p,r_0},b_{n,p})]\},\\
\mathbf{A}_3 &=& \{r \mid r>r_0, r=O[\max(r_0,\log p)]\},\\
\mathbf{A}_4 &=& \{r \mid r>r_0, r/\max(r_0,\log p) \to \infty\}.
\end{eqnarray*}
Notice that, by definition, for any $r_i \in \mathbf{A}_i$ ($i=1,2,3,4$), we have $r_1<r_2<r_0<r_3<r_4$. 

Together, Theorem~\ref{thm:fr-Consistency} and Theorem~\ref{thm:r<r0 consistency} have already established the convergence rate of $\hat{\Theta}_r$ to be $\max(a_{n,p,r_0},b_{n,p})$ for $r \in \mathbf{A}_2 \cup \mathbf{A}_3 \cup \{r_0\}$. 
A penalty function that satisfies Conditions \ref{(C.3)} and \ref{(C.4)} will ensure that 
the solution to (\ref{eq:rank-MLE}) cannot be in the set $\{\hat{\Theta}_r \mid r \in \mathbf{A}_1 \cup \mathbf{A}_4\}$. 

Specifically, as $\|\hat{\Theta}_r-\Theta_0\|_F \geq d_{r,r_0}$, any $\hat{\Theta} \in \{\hat{\Theta}_r \mid r \in \mathbf{A}_1\}$ cannot achieve the convergence rate given in Theorem~\ref{thm:FR-penalty-consistency}, but Condition \ref{(C.3)} ensures that such a $\hat{\Theta}$ will not be chosen by (\ref{eq:rank-MLE}) because, if $r \in \mathbf{A}_1$, we have \[\trace(\hat{\Theta}_rS)-\log{|\hat{\Theta}_r|} \geq \trace(\hat{\Theta}_{r_0}S)-\log{|\hat\Theta_{r_0}|},\] and \[\tau{(r)}<\tau{(r_0)}.\] 
The first inequality encourages the optimization problem (\ref{eq:rank-MLE}) to favor a solution with $\rank(L)=r_0$ while the second inequality encourages it to favor one with a smaller rank, $r$. Condition \ref{(C.3)} will ensure that $\tau(r_0)-\tau(r)$ is relatively small so that the influence from the penalty function (the second inequality above) will remain relatively weak.
Likewise, by Theorem~\ref{thm:fr-Consistency}, any $\hat{\Theta} \in \{\hat{\Theta}_r \mid r\in \mathbf{A}_4\}$ cannot achieve the convergence rate given in Theorem~\ref{thm:FR-penalty-consistency}, either, but Condition \ref{(C.4)} will ensure that, for $r\in \mathbf{A}_4$, $\tau{(r)}-\tau{(r_0)}$ is sufficiently large so that the influence from the penalty function is strong enough to outweigh the fact that the first inequality above has now switched direction for $r\in\mathbf{A}_4$.


\subsection{A concrete example}
\label{sec:rank-penalty-example}

At this point, it will help greatly to see a concrete example of penalty functions that satisfy Conditions \ref{(C.3)} and \ref{(C.4)}.
Given $n$ observations from a $p$-dimensional multivariate Gaussian model with a rank-$r$ covariance matrix, where $r \leq p$, \cite{akaike1987factor} derived that the Akaike information criterion (AIC) is 
\begin{eqnarray}
\label{eq:AIC}
\AIC(r) &=& \frac{1}{n} \left[ (-2)\sum_{i=1}^n \ell(x_i)+\left\{2p(r+1)-r(r-1)\right\} \right],
\end{eqnarray}
where $\ell(x)$ denotes the log-density function.
A penalty function that satisfies both Conditions \ref{(C.3)} and \ref{(C.4)} is
\begin{eqnarray}
\label{eq:penalty}
\tau(r)&=&\delta_{n,p}\left\{2p(r+1)-r(r-1)\right\}/n,
\end{eqnarray}
in which 
\begin{eqnarray}
\label{eq:delta np infinity}
\delta_{n,p} &\to& \infty,
\end{eqnarray}
and
\begin{eqnarray}
\label{eq:delta np upper bound}
\delta_{n,p} &=& o\left\{d_{r,r_0}^2 n/(r_0p)\right\} \quad\mbox{for all}\quad r \in \mathbf{A}_1.
\end{eqnarray}
Therefore, we can see that (\ref{eq:penalty}) is essentially a scaled version of the AIC penalty. The condition (\ref{eq:delta np infinity}) on the scaling factor $\delta_{n,p}$ means that the penalty (\ref{eq:penalty}) is slightly larger than the AIC penalty asymptotically. 

For all $r \in \mathbf{A}_1$, $d_{r,r_0}^2/(r_0p/n) \to \infty$ by definition, so (\ref{eq:delta np upper bound}) does not contradict with (\ref{eq:delta np infinity}); it is also equivalent to 
\begin{eqnarray*}
\delta_{n,p} &=& o\left[\underset{r \in \mathbf{A}_1}{\min}\{d_{r,r_0}^2n
/(r_0p)\}\right].
\end{eqnarray*}
To verify that (\ref{eq:penalty}) satisfies Conditions \ref{(C.3)} and \ref{(C.4)}, notice that  
\begin{eqnarray*}
\tau(r)-\tau(r_0)&=& \delta_{n,p}(r-r_0)\{2p-(r+r_0-1)\}/n.
\end{eqnarray*}
On the one hand, any given $r<r_0$ such that $d_{r,r_0}/\max(a_{n,p,r_0},b_{n,p}) \to \infty$ is in the set $ \mathbf{A}_1$ and 
\begin{eqnarray*}
|\tau(r)-\tau(r_0)|/d_{r,r_0}^2&=&\delta_{n,p}(r-r_0)\{2p-(r+r_0-1)\}/(d_{r,r_0}^2n)\\
&=& o\left[(r-r_0)\{2p-(r+r_0-1)\}/(r_0p)\right]\\
&=& o(1),
\end{eqnarray*} 
so Condition \ref{(C.3)} is satisfied. On the other hand, any given $r > r_0$ such that $r/\max(r_0,\log p) \to \infty$ is in the set $\mathbf{A}_4$ and
\begin{eqnarray*}
a_{n,p,r}^2/|\tau(r)-\tau(r_0)|&=&rp/[\delta_{n,p}(r-r_0)\{2p-(r+r_0-1)\}]\\
&=& o(1),
\end{eqnarray*}
so Condition \ref{(C.4)} is satisfied.
\subsection{Discussion}

The convergence rate given by Theorem~\ref{thm:FR-penalty-consistency} applies both to finite $r_0$ and to $r_0$ that may diverge to infinity with $p$ and $n$. If $r_0$ is fixed and finite, the consistency of $\hat{\Theta}$ is driven by $b_{n,p}=[(p\log p)/n]^{1/2}$, whose order is greater than that of $a_{n,p,r_0}$; otherwise, it is possible for the convergence rate to be driven by $a_{n,p,r_0}=r_0^{1/2}(p/n)^{1/2}$ --- e.g., if $r_0$ goes to infinity faster than does $\log p$.

One can better assess our convergence rate here in the Frobenius norm by comparing it with the convergence rate of the ``sparse precision matrix estimator'' given by \cite{rothman2008sparse}. Their convergence rate in the Frobenius norm is $\{(p+s)(\log p)/n\}^{1/2}$, in which $s$ is the number of nonzero off-diagonal entries in the population precision matrix. For fixed $s$, their rate becomes $\{(p\log p)/n\}^{1/2}$ and is the same as our rate ($b_{n,p}$) for fixed $r_0$. 

That these convergence rates are of a comparable order provides another argument that the low-rank assumption can be regarded as an analogue of the sparsity assumption for estimating high-dimensional covariance/precision matrices, except that it encourages a slightly different matrix structure.

\section{A BLOCKWISE COORDINATE DESCENT ALGORITHM}
\label{sec:algorithm}

We now describe a computational algorithm for solving the optimization problem (\ref{eq:rank-MLE}). As we have pointed out in Section~\ref{sec:rank-penalty}, the solution to (\ref{eq:rank-MLE}) can only be one of $\{\hat{\Theta}_r: r=1,2,...,p\}$. In principle, this means we can simply solve (\ref{eq:fr-mle}) for all $r\in\{1,...,p\}$ and choose the one that minimizes the objective function (\ref{eq:rank-MLE}). In practice, it is usually sufficient, and not impractical, to do this only on a subset of $\{1,2,...,p\}$, say $\mathbb{Z}_r$. 

That is, we first obtain a series of fixed-rank estimators, $\hat{\Theta}_r$, by solving (\ref{eq:fr-mle}) for each $r \in \mathbb{Z}_r$. Then, we use the penalty function (\ref{eq:penalty}), given in Section \ref{sec:rank-penalty-example}, and evaluate the objective function (\ref{eq:rank-MLE}) at each $\{\hat{\Theta}_r \mid r \in \mathbb{Z}_r\}$, and the one that minimizes the objective function (\ref{eq:rank-MLE}) is taken as the solution, $\hat{\Theta}$. As we do not have an explicit expression for $\delta_{n,p}$, it is treated in practice as a tuning parameter 
and selected by minimizing the objective function on a separate, validation data set. 

For each $r \in \mathbb{Z}_r$, $\hat{\Theta}_r$ is obtained by solving the fixed-rank optimization problem (\ref{eq:fr-mle}) with a blockwise coordinate descent algorithm, which iteratively updates $L$ and $D$ (see Algorithm~\ref{algo}). For fixed $D$, we can actually solve for $L$ analytically; this provides an enormous amount of computational saving. The validity of Step 2, the analytic update of $L$ given $D$, is established by Lemma 4 in the appendices. For fixed $L$, we solve a log-determinant semi-definite program over $D$, e.g., using the SDPT3 solver \citep{tutuncu2003solving} available as part of the YALMIP toolbox \citep{lofberg2004yalmip} in Matlab; the fact that $D$ is diagonal means the log-determinant semi-definite program here is one of the cheapest kinds to solve. 

To initialize the blockwise coordinate descent algorithm for each $r \in \mathbb{Z}_r$, we suggest arranging all $r \in \mathbb{Z}_r$ in ascending order and solving for each $\hat{\Theta}_r$ sequentially, using the last solution as a ``warm start'' for finding the next solution. To be more specific, for $r^{(1)} < r^{(2)} < ... \in \mathbb{Z}_r$, we suggest using the diagonal component of $\hat{\Theta}_{r^{(k-1)}}$, namely $\hat{D}_{r^{(k-1)}}$, as the initial point ($D^{(0)}$ in Algorithm~\ref{algo}, Step 1) for obtaining $\hat{\Theta}_{r^{(k)}}$. To initialize the algorithm for the very first $\hat{\Theta}_{r^{(1)}}$, we suggest using the solution of (\ref{eq:fr-mle}) corresponding to $r=0$; taking $r=0$ means there is no low-rank component, so we have an analytical solution, $D^{(0)}=\hat{D}_{0}=\diag\{s_{11}^{-1},\ldots,s_{pp}^{-1}\}$, where $s_{jj}$ is the $j$th diagonal element of the sample covariance matrix $S$. Our experience from running many numerical experiments shows that obtaining $\hat{\Theta}_r$ in such a sequential manner is much more efficient than obtaining each $\hat{\Theta}_r$ independently with random ``cold start'' initialization. 

\noindent\textbf{Remark 3} \textit{We think Lemma 4, the analytic update of $L$ given $D$, is a useful piece of contribution on its own. It can be used to obtain other ``low-rank + something'' type of decompositions of precision matrices, as the low-rank step (Step 2 of the algorithm) does not depend on $D$ being diagonal. For example, one can assume that $D$ is a sparse matrix and the coordinate descent algorithm (Algorithm~\ref{algo}) can still be applied, as long as one modifies Step 3 to include a sparsity penalty such as $\|D\|_1=\sum_{i,j} |D_{ij}|$, although we generally will expect the resulting Step 3 to become more computationally expensive than it is when $D$ is diagonal.}

\begin{algorithm}
Blockwise coordinate descent algorithm for solving (\ref{eq:fr-mle}) for each $r \in \mathbb{Z}_r$.
\begin{tabbing}
  \enspace Step 1: Initialize $D^{(0)}$.\\
  \enspace Step 2: Fix $D^{(i)}$ and update $L^{(i+1)}$ analytically. \\
   \qquad -- Obtain the eigen-decomposition of $(D^{(i)})^{1/2}S(D^{(i)})^{1/2}$.\\          
   \qquad -- Let $w_1^{(i)},\ldots,w_r^{(i)}$ denote the $r$ largest eigenvalues. \\
   \qquad -- Let $u_1^{(i)},\ldots,u_r^{(i)}$ denote the corresponding eigenvectors.\\
   \qquad -- Set $U^{(i+1)}=[\begin{array}{ccc}u_1^{(i)} & \ldots & u_r^{(i)}\end{array}]$.\\
   \qquad -- Set $V^{(i+1)}=\diag\{1-1/\max{(w_1^{(i)},1)},\ldots,1-1/\max{(w_r^{(i)},1)}\}$, \\
   \qquad -- Set $L^{(i+1)}=(D^{(i)})^{1/2}U^{(i+1)} V^{(i+1)} (U^{(i+1)})^{\T}(D^{(i)})^{1/2}$.\\
  \enspace Step 3: Fix $L^{(i)}$ and update $D^{(i+1)}$ by solving a log-determinant semi-definite program.\\
   \qquad -- Minimize $\trace\{(D-L^{(i)})S\}-\log|D-L^{(i)}|$ over $D$.\\
\enspace  Step 4: Repeat Step 2 and 3 until 
$\trace\{(D^{(i)}-L^{(i)})S\}-\log|D^{(i)}-L^{(i)}|$ converges. 
\end{tabbing}
\label{algo}
\end{algorithm}

\section{SIMULATION}
\label{sec:numerical}
\subsection{Simulation settings}
In this section, we compare four different estimators of the covariance/precision matrix: the sample covariance matrix ($S$); a simple diagonal estimator ($D_S$), which keeps only the diagonal elements of $S$ and sets all off-diagonal elements to zero; the graphical lasso (Glasso) by \cite{friedman2008sparse}; and our method (LD). The graphical lasso is implemented with the \verb!R! package \verb!glasso!.

Using a training sample size of $n=100$, we generated data from $p$-dimensional ($p=50, 100, 200$) normal distributions with mean $0$ and the following five population covariance matrices:
\begin{example}
\label{ex.1}
The matrix $\Sigma_1$ is compound symmetric, $\Sigma_1=(0.2)1_p1_p^{\T}+(0.8)I_p$.
\end{example}
\begin{example}
\label{ex.2}
The matrix $\Sigma_2$ is ``low-rank + diagonal'', $\Sigma_2=I_p+RR^{\T}$, 
where $R \in \mathbf{R}^{p\times5}$ and all of its elements are independently sampled from the Uniform$(0,1)$ distribution.
\end{example}
\begin{example}
\label{ex.3}
The matrix $\Sigma_3$ is block diagonal, consisting of $5$ identical blocks $B=(0.2)1_q1_q^{\T}+(0.8)I_q$, where $q=p/5$.
\end{example}
\begin{example}
\label{ex.4}
The matrix $\Sigma_4$ is almost ``low-rank + diagonal'' but with some perturbations. First, a ``low-rank + diagonal'' matrix is created, $B_0=I_p+RR^{\T}$, where $R \in \mathbf{R}^{p \times 3}$ and all of its elements are independently sampled with probability $0.8$ from the Uniform$(0,1)$ distribution and set to $0$ otherwise. Next, a perturbation matrix $B_1 \in \mathbf{R}^{p \times p}$ is created, whose elements are independently sampled with probability $0.05$ from the Uniform$(-0.05,0.05)$ distribution and set to $0$ otherwise. Then, the perturbation matrix $B_1$ is symmetrized before being combined with $B_0$ to obtain $B=\left\{B_0^{-1}+(B_1+B_1^{\T})/2\right\}^{-1}$. Finally, we let $\Sigma_4=B+\delta I_p$, with $\delta = |\min(\lambda_{\min}(B),0)|+0.05$, to ensure it is positive definite.
\end{example}
\begin{example}
\label{ex.5}
The matrix $\Sigma_5$ is designed to have a sparse inverse. First, a baseline matrix $B_0 \in \mathbf{R}^{p \times p}$ is created where all of its elements are set to $0.5$ with probability $0.5$ and $0$ otherwise. Then, it is symmetrized and made positive definite before being inverted: $B=B_0+B_0^{\T}$, $\delta = |\min(\lambda_{\min}(B),0)|+0.05$, and $\Sigma_5=\left(B+\delta I_p\right)^{-1}$.
\end{example}
Each population covariance matrix in the first three examples can be decomposed into a low-rank plus a diagonal matrix. Let the decomposition be $\Sigma_k=L_{\Sigma_k}+D_{\Sigma_k}$ for $k=1,2,3$; then, $L_{\Sigma_1} \in \mathbf{S}^{p,1}_{+}$ and $L_{\Sigma_2}, L_{\Sigma_3} \in \mathbf{S}^{p,5}_{+}$. Example \ref{ex.4} is used to test the robustness of our method; starting from a ``low-rank + diagonal'' matrix, we randomly perturbed approximately $10\%$ of the elements in the corresponding precision matrix. Example \ref{ex.5} is used to illustrate the performance of our method in a situation that is ideal to the graphical lasso, where the
corresponding precision matrix is sparse.

Tuning parameters are selected by minimizing the negative log-likelihood function on a separate validation data set of size $100$. For the graphical lasso, the tuning parameter was selected from $\{0.01,0.03,0.05,0.07,0.09,$ $0.11,0.15,0.20\}$. For our method, we used $\mathbb{Z}_r=\{1,3,5,7,9\}$, and the tuning parameter $\delta_{n,p}$ was selected from $\{0.6,0.8,1.0,1.2,1.4\}$. Recall from Section~\ref{sec:algorithm} that only the size of $\mathbb{Z}_r$ affects our computational time, not the number of tuning parameters we evaluate. 

\subsection{Estimation accuracy}

As \cite{rothman2008sparse}, we evaluated the estimation accuracy with the Kullback--Leibler loss, 
\begin{align}
L_{KL}\left(\hat{\Theta},\Theta_0\right)=
\trace\left(\Theta_0^{-1}\hat{\Theta}\right)-\log|\Theta_0^{-1}\hat{\Theta}|-p.
\label{eq:KL loss}
\end{align}
When $\hat{\Theta}=\Theta_0$, the true precision matrix, the loss achieves its minimum of zero. For the graphical lasso and our method, the estimated precision matrix $\hat{\Theta}$ could be directly plugged into the loss function (\ref{eq:KL loss}); for $S$ and $D_S$, the estimated covariance matrix needed to be inverted first. Thus, we could not evaluate the loss for $S$ when $p=100$ and $p=200$, because it was non-invertible.

Table \ref{tab:simu-result} reports the average Kullback--Leibler loss over $100$ replications and its standard error. Not surprisingly, the sample covariance matrix $S$ was the worst estimator; the diagonal estimator $D_S$ was better in most cases, but not as good as the other two methods. In the first four examples, our method outperformed the graphical lasso. In Example \ref{ex.5}, an ideal case for the graphical lasso in which the population precision matrix was sparse, our method performed slightly worse than, but still remained largely competitive against, the graphical lasso.

\begin{table}
\def~{\hphantom{0}}
\caption{Average (standard error) of Kullback--Leibler loss over $100$ replications.
 \label{tab:simu-result}}
 \begin{center}
   \begin{tabular}{llcccc}
   \hline
          &       & $S$     & $D_S$     & Glasso & LD \\    
		  \hline\\
    Example 1& $p = 50$ & 37.59 (0.311) & 9.058 (0.011) & 2.618 (0.016) & 0.980 (0.019) \\
          & $p = 100$ & NA    & 20.09 (0.017) & 5.496 (0.029) & 1.983 (0.026) \\
          & $p = 200$ & NA    & 42.73 (0.024) & 11.39 (0.050) & 3.893 (0.040) \\         [5pt]
    Example 2 & $p = 50$ & 37.44 (0.331) & 36.80 (0.019) & 4.148 (0.024) & 2.751 (0.030) \\
          & $p = 100$ & NA    & 80.70 (0.043) & 9.469( 0.044) & 5.708 (0.043) \\
          & $p = 200$ & NA    & 170.0 (0.071) & 20.38 (0.082) & 11.85 (0.060) \\         [5pt]
    Example 3 & $p = 50$ & 37.67 (0.341) & 5.417 (0.011) & 3.080 (0.026) & 3.247 (0.038) \\
          & $p = 100$ & NA    & 14.40 (0.016) & 7.643 (0.038) & 6.103 (0.046) \\
          & $p = 200$ & NA    & 34.72 (0.022) & 16.48 (0.074) & 12.00 (0.076) \\[5pt]         
    Example 4& $p = 50$ & 37.52 (0.316) & 26.21 (0.017) & 3.522 (0.023) & 2.028 (0.022) \\
          & $p = 100$ & NA    & 33.00 (0.019) & 7.534 (0.040) & 3.917 (0.036) \\
          & $p = 200$ & NA    & 136.4 (0.062) & 16.35 (0.066) & 9.044 (0.057) \\[5pt]         
    Example 5 & $p = 50$ & 37.57 (0.312) & 42.80 (0.020) & 8.267 (0.034) & 9.949 (0.046) \\
          & $p = 100$ & NA    & 78.15 (0.028) & 22.03 (0.047) & 24.13 (0.073) \\
          & $p = 200$ & NA    & 180.1 (0.035) & 59.89 (0.096) & 61.08 (0.123)  \\
		  \hline  
    \end{tabular}
	\end{center}
\end{table}

\subsection{Rank recovery}
We also investigated how well $r_0$ was recovered by comparing the $10$ largest eigenvalues of $\hat{L}$ with those of $L_0$, the low-rank component of the population precision matrix. According to (\ref{eq:L+D inv}), $L_0$ can be derived as 
\[
L_0=
D_{\Sigma_0}^{-1}
\left(I+L_{\Sigma_0}D_{\Sigma_0}^{-1}\right)^{-1}
L_{\Sigma_0}D_{\Sigma_0}^{-1}.
\]
For Examples~\ref{ex.1}--\ref{ex.3}, the components $L_{\Sigma_0}$ and $D_{\Sigma_0}$ could be obtained directly from the set-up. For Example \ref{ex.4}, because of the perturbation, the components were only approximate: $L_{\Sigma_4} \approx RR^{\T}$ where $R \in \mathbf{R}^{p \times 3}$, and $D_{\Sigma_4} \approx I_p$. We skip Example~\ref{ex.5} here because the true covariance/precision matrix does not have a corresponding low-rank component. 

As the results were similar for different values of $p$, we only present here those for $p = 100$. In Fig.~\ref{fig:Pre}, the $10$ largest eigenvalues of $L_0$ and of $\hat{L}$ are plotted. For $\hat{L}$, the bigger dots in the middle are the averages over $100$ replications; the smaller dots above and below are the values, $(\text{average}) \pm (1.96)(\text{standard error})$.  We can see that on average our method successfully identified the nonzero eigenvalues, or the rank, of $L_0$. 

\begin{figure}[htbp]
\centering
\includegraphics[width=\textwidth]{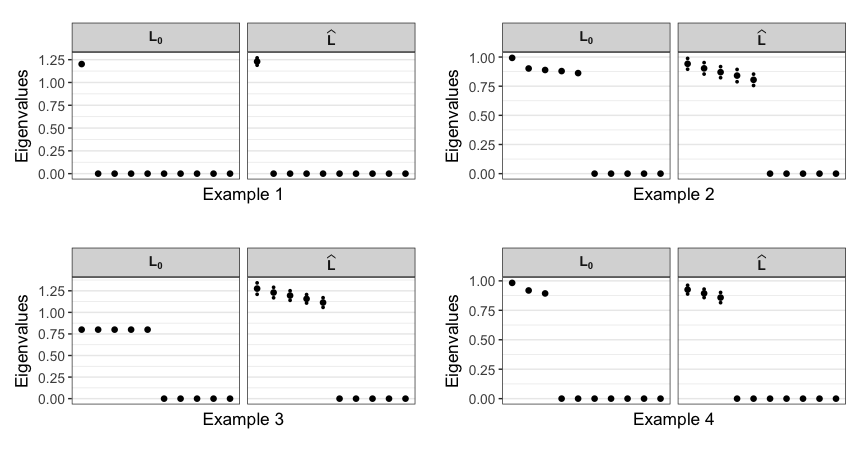}
\caption{Comparison of the $10$ largest eigenvalues of $L_0$ and those of $\hat{L}$ [$(\text{average}) \pm (1.96)(\text{standard error})$].}
\label{fig:Pre}
\end{figure}

\section{REAL DATA ANALYSIS}
\label{sec:realdata} 
To showcase a real application of our method to high-dimensional covariance/precision matrix estimation, we discuss the classic Markowitz portfolio selection problem \citep{markowitz1952portfolio}. In this problem, we have the opportunity to invest in $p$ assets, and the aim is to determine how much to invest in each asset so that a certain level of expected return is achieved while the overall risk is minimized. To be more specific, let $\mu$ be the mean returns of $p$ assets and $\Sigma$, their covariance matrix. Let $1_p$ be the $p$-dimensional vector $(1,1,...,1)^{\T}$. Then, the Markowitz problem is formulated as
\begin{align}
\label{eq:Markowitz}
\hat{w}=\argmin w^{\T}\Sigma w
\text{~~~subject to~~~}
w^{\T}\mu=\mu_{0}, w^{\T} 1_p=1,
\end{align}
in which $w$ is a vector of asset weights, $\mu_0$ is the desired level of expected return, and $w^{\T}\Sigma w$ is the variance of the portfolio, which quantifies the investment risk.

In practice, $\mu$ and $\Sigma$ can be estimated respectively by the sample mean and the sample covariance matrix before the optimization problem (\ref{eq:Markowitz}) is solved, provided that the sample size $n$ is much larger than the dimension $p$; in high dimensions, however, solving (\ref{eq:Markowitz}) with the sample covariance matrix often leads to undesirable \emph{risk underestimation} \citep{el2010high}. Instead, different estimators of the covariance matrix can be used, such as those we have studied in the previous section: namely, the diagonal estimator ($D_S$), the graphical lasso (Glasso), and our method (LD).

To compare these different covariance matrix estimators for solving the Markowitz problem, we used monthly stock return data of companies in the S\&P100 index from January 1990 to December 2007, as did \cite{xue2012positive}. This dataset contains $p=67$ companies that 
remained in the S\&P100 throughout this entire period; for each stock, there are $12\times(2007-1990+1)=216$ monthly returns. 

For each month starting in January 1996, we first constructed a portfolio by solving the Markowitz problem using an estimated $\mu$ and $\Sigma$ from the preceding $n=72$ monthly returns, and a target return of $\mu_0=1.3\,\%$. The performance of the resulting portfolio was then measured by its return in that month. For any given estimator of $\Sigma$, a total of $12\times(2007-1996+1)=144$ portfolios were constructed and evaluated in this manner.

We used three-fold cross-validation to choose the tuning parameters for both the graphical lasso and our method. Each time, portfolios were constructed based on two-thirds of the training data (48 months), and the tuning parameter that maximized the average return on the remaining one-third of the training data (24 months) was selected. For the graphical lasso, the tuning parameter was selected from $\{0.2,0.4,\ldots,3.0\}$. For our method, we chose from the same set of tuning parameters, and the candidate ranks we considered, $\mathbb{Z}_r$, consisted of all even numbers between $2$ and $28$.

Table \ref{tab:financial-data-result} shows the results. Again, the sample covariance matrix was noticeably outperformed by all of the other three methods. Our method (LD) was better than $D_S$ in terms of both the average return and the overall volatility (standard error). Comparing with the graphical lasso, although our average return was slightly lower, our portfolio  had much lower volatility, and hence a higher Sharpe ratio, a popular measure of overall portfolio performance in finance defined as $[\text{mean}(x-x_b)]/[\text{stdev}(x-x_b)]$, where $x$ is the portfolio's and $x_b$ is the risk-free rate of return. For this demonstration here, we simply took $x_b=0$ to be constant. 

\begin{table}
\def~{\hphantom{0}}
  \caption{Average, standard error, and Sharpe ratio of monthly portfolio returns, January 1996 to December 2007. All numbers are expressed in \%.
  \label {tab:financial-data-result}} 
  \begin{center}
   \begin{tabular}{lcccc}
  \hline
 & $S$ & $D_S$ & Glasso & LD\\\hline
Average & 0.70 & 1.32 & 1.42 & 1.41 \\
Standard Error & 13.2 & 5.08 & 5.13 & 4.73\\
Sharpe ratio & 5.30 & 26.0 & 27.7& 29.8\\
\hline
    \end{tabular}
\end{center}
\end{table}

\section{CONCLUSION} 
We have proposed a high-dimensional covariance/precision matrix estimation method that decomposes the covariance/precision matrix into a low-rank plus a diagonal matrix. This structural assumption can be understood as being driven by a factor model and as an alternative to the popular sparsity assumption to facilitate estimation in high-dimensional problems. We estimate the precision instead of the covariance matrix because the resulting negative log-likelihood function is convex and because the precision matrix can be directly applied in many statistical procedures. 

Starting with a fixed-rank estimator, we have shown how it can be used to provide a more general estimator by maximizing a penalized likelihood criterion. Unlike \cite{taeb2016interpreting}, who used a nuclear-norm penalty to constrain the rank, we impose a penalty directly on the matrix rank itself. 

The theoretical conditions for a valid penalty function have been studied in general, and a specific example, which is related to the Akaike information criterion, has been discussed and tested. Under these conditions, we have derived the convergence rates of the estimation error in the Frobenius norm. Numerically, we have proposed a blockwise coordinate descent algorithm that optimizes our objective function by iteratively updating the low-rank component and the diagonal component, and provided both simulated and real data examples showing that our method could have some advantages over a number of alternative estimators. However, this algorithm could lead to a local minimizer instead of a global one. A convenient solution is to initialize from multiple  starting points to increase the chance of finding a global minimizer. We did not 
recommend this, because our deterministic initialization (``warm starts'')  already produced nice results in numerical experiments, and it did not seem 
worthwhile to increase the computational cost.

An immediate extension of our method is that it can be adapted easily to solve the latent variable graphical model selection problem. As mentioned in Section \ref{sec:intro}, \cite{chandrasekaran2010latent} decomposed the observed marginal precision matrix into a sparse and a low-rank component. They used the $\ell_1$-norm as a penalty to encourage sparsity and the nuclear- or trace-norm as a penalty to encourage low-rank-ness. If the rank can be fixed {\it a priori} to be $r$, then we can extend our method easily to solve this problem, by removing the constraint $D\in\mathbf{D}^p$ and adding an $\ell_1$-penalty $\|D\|_1$ to the objective function in (\ref{eq:fr-mle}) instead. If the rank $r$ cannot be fixed, then our rank-penalized method in Section \ref{sec:rank-penalty} can be extended analogously. To solve the modified optimization problem, we only need to modify Algorithm~\ref{algo} slightly by adding an $\ell_1$-penalty on $D$ in  Step 3 to solve for a sparse rather than diagonal component while the low-rank component is fixed. 

Another possible extension could be to consider relaxing the normality assumption in our method. To do so, we would almost certainly need to make explicit assumptions about the tail behavior of the data distribution, which might change the convergence rate of the resulting estimator. Although our objective function is based on the normal likelihood, it works by pushing the covariance matrix estimate towards the sample covariance matrix on one hand and encouraging the assumed ``low-rank + diagonal'' structure on the other. As a result, the estimation accuracy depends on how well the sample covariance matrix can approximate its population counterpart, which is affected by the tail behavior of the data distribution. 

Finally, in this paper we have studied the proposed covariance/precision matrix estimators solely in terms of their estimation accuracy. It could also be interesting to study their performances in other problems, such as discriminant analysis and hypothesis testing, in terms of other performance metrics, such as misclassification probability and statistical power.

\bibliographystyle{apalike}
\bibliography{LDref}

\begin{thebibliography}{}

\bibitem[Akaike, 1987]{akaike1987factor}
Akaike, H. (1987).
\newblock Factor analysis and aic.
\newblock {\em Psychometrika}, 52(3):317--332.

\bibitem[Banerjee et~al., 2008]{banerjee2008model}
Banerjee, O., El~Ghaoui, L., and d'Aspremont, A. (2008).
\newblock Model selection through sparse maximum likelihood estimation for
  multivariate gaussian or binary data.
\newblock {\em The Journal of Machine Learning Research}, 9:485--516.

\bibitem[Bickel and Levina, 2004]{bickel2004some}
Bickel, P.~J. and Levina, E. (2004).
\newblock Some theory for fisher's linear discriminant function,'naive bayes',
  and some alternatives when there are many more variables than observations.
\newblock {\em Bernoulli}, 10(6):989--1010.

\bibitem[Bickel and Levina, 2008]{bickel2008regularized}
Bickel, P.~J. and Levina, E. (2008).
\newblock Regularized estimation of large covariance matrices.
\newblock {\em The Annals of Statistics}, 36(1):199--227.

\bibitem[Cai and Liu, 2011]{cai2011adaptive}
Cai, T.~T. and Liu, W. (2011).
\newblock Adaptive thresholding for sparse covariance matrix estimation.
\newblock {\em Journal of the American Statistical Association},
  106(494):672--684.

\bibitem[Cai et~al., 2011]{cai2011constrained}
Cai, T.~T., Liu, W., and Luo, X. (2011).
\newblock A constrained $l_1$ minimization approach to sparse precision matrix
  estimation.
\newblock {\em Journal of the American Statistical Association},
  106(494):594--607.

\bibitem[Cai et~al., 2015]{cai2015optimal}
Cai, T.~T., Ma, Z., and Wu, Y. (2015).
\newblock Optimal estimation and rank detection for sparse spiked covariance
  matrices.
\newblock {\em Probability theory and related fields}, 161(3-4):781--815.

\bibitem[Cai et~al., 2013]{cai2013optimal}
Cai, T.~T., Ren, Z., and Zhou, H.~H. (2013).
\newblock Optimal rates of convergence for estimating toeplitz covariance
  matrices.
\newblock {\em Probability Theory and Related Fields}, 156(1-2):101--143.

\bibitem[Cai et~al., 2016]{cai2016estimating}
Cai, T.~T., Ren, Z., Zhou, H.~H., et~al. (2016).
\newblock Estimating structured high-dimensional covariance and precision
  matrices: Optimal rates and adaptive estimation.
\newblock {\em Electronic Journal of Statistics}, 10(1):1--59.

\bibitem[Chandrasekaran et~al., 2012]{chandrasekaran2010latent}
Chandrasekaran, V., Parrilo, P.~A., and Willsky, A.~S. (2012).
\newblock Latent variable graphical model selection via convex optimization.
\newblock {\em The Annals of Statistics}, 40(4):1935 -- 1967.

\bibitem[Dudoit et~al., 2002]{dudoit2002comparison}
Dudoit, S., Fridlyand, J., and Speed, T.~P. (2002).
\newblock Comparison of discrimination methods for the classification of tumors
  using gene expression data.
\newblock {\em Journal of the American Statistical Association},
  97(457):77--87.

\bibitem[El~Karoui, 2010]{el2010high}
El~Karoui, N. (2010).
\newblock High-dimensionality effects in the markowitz problem and other
  quadratic programs with linear constraints: Risk underestimation.
\newblock {\em The Annals of Statistics}, 38(6):3487--3566.

\bibitem[Fan and Fan, 2008]{fan2008high}
Fan, J. and Fan, Y. (2008).
\newblock High dimensional classification using features annealed independence
  rules.
\newblock {\em The Annals of Statistics}, 36(6):2605.

\bibitem[Fan et~al., 2008]{fan2008high2}
Fan, J., Fan, Y., and Lv, J. (2008).
\newblock High dimensional covariance matrix estimation using a factor model.
\newblock {\em Journal of Econometrics}, 147(1):186--197.

\bibitem[Fan et~al., 2011]{fan2011high}
Fan, J., Liao, Y., and Mincheva, M. (2011).
\newblock High dimensional covariance matrix estimation in approximate factor
  models.
\newblock {\em The Annals of Statistics}, 39(6):3320.

\bibitem[Fan et~al., 2013]{fan2013large}
Fan, J., Liao, Y., and Mincheva, M. (2013).
\newblock Large covariance estimation by thresholding principal orthogonal
  complements.
\newblock {\em Journal of the Royal Statistical Society: Series B (Statistical
  Methodology)}, 75(4):603--680.

\bibitem[Friedman et~al., 2008]{friedman2008sparse}
Friedman, J., Hastie, T., and Tibshirani, R. (2008).
\newblock Sparse inverse covariance estimation with the graphical lasso.
\newblock {\em Biostatistics}, 9(3):432--441.

\bibitem[Friedman, 1989]{friedman1989regularized}
Friedman, J.~H. (1989).
\newblock Regularized discriminant analysis.
\newblock {\em Journal of the American Statistical Association},
  84(405):165--175.

\bibitem[Furrer and Bengtsson, 2007]{furrer2007estimation}
Furrer, R. and Bengtsson, T. (2007).
\newblock Estimation of high-dimensional prior and posterior covariance
  matrices in kalman filter variants.
\newblock {\em Journal of Multivariate Analysis}, 98(2):227--255.

\bibitem[Henderson and Searle, 1981]{henderson1981deriving}
Henderson, H.~V. and Searle, S.~R. (1981).
\newblock On deriving the inverse of a sum of matrices.
\newblock {\em SIAM Review}, 23(1):53--60.

\bibitem[Johnstone, 2001]{johnstone2001distribution}
Johnstone, I.~M. (2001).
\newblock On the distribution of the largest eigenvalue in principal components
  analysis.
\newblock {\em The Annals of Statistics}, 29(2):295--327.

\bibitem[Lam and Fan, 2009]{lam2009sparsistency}
Lam, C. and Fan, J. (2009).
\newblock Sparsistency and rates of convergence in large covariance matrix
  estimation.
\newblock {\em The Annals of Statistics}, 37(6B):4254--4278.

\bibitem[Ledoit and Wolf, 2004]{ledoit2004well}
Ledoit, O. and Wolf, M. (2004).
\newblock A well-conditioned estimator for large-dimensional covariance
  matrices.
\newblock {\em Journal of Multivariate Analysis}, 88(2):365--411.

\bibitem[Li and Shao, 2015]{li2014sparse}
Li, Q. and Shao, J. (2015).
\newblock Sparse quadratic discriminant analysis for high dimensional data.
\newblock {\em Statistica Sinica}, 25(2):457--473.

\bibitem[Lofberg, 2004]{lofberg2004yalmip}
Lofberg, J. (2004).
\newblock {YALMIP}: A toolbox for modeling and optimization in {MATLAB}.
\newblock In {\em Proceedings of the 2004 IEEE International Symposium on
  Computer Aided Control Systems Design}, pages 284--289. IEEE.

\bibitem[Magnus, 1985]{magnus1985differentiating}
Magnus, J.~R. (1985).
\newblock On differentiating eigenvalues and eigenvectors.
\newblock {\em Econometric Theory}, 1(02):179--191.

\bibitem[Markowitz, 1952]{markowitz1952portfolio}
Markowitz, H. (1952).
\newblock Portfolio selection.
\newblock {\em The journal of finance}, 7(1):77--91.

\bibitem[Meinshausen and B{\"u}hlmann, 2006]{meinshausen2006high}
Meinshausen, N. and B{\"u}hlmann, P. (2006).
\newblock High-dimensional graphs and variable selection with the lasso.
\newblock {\em The Annals of Statistics}, 34(3):1436--1462.

\bibitem[Recht et~al., 2010]{recht2010guaranteed}
Recht, B., Fazel, M., and Parrilo, P.~A. (2010).
\newblock Guaranteed minimum-rank solutions of linear matrix equations via
  nuclear norm minimization.
\newblock {\em SIAM review}, 52(3):471--501.

\bibitem[Rocha et~al., 2008]{rocha2008path}
Rocha, G.~V., Zhao, P., and Yu, B. (2008).
\newblock A path following algorithm for sparse pseudo-likelihood inverse
  covariance estimation (splice).
\newblock {\em arXiv preprint arXiv:0807.3734}.

\bibitem[Rothman, 2012]{rothman2012positive}
Rothman, A.~J. (2012).
\newblock Positive definite estimators of large covariance matrices.
\newblock {\em Biometrika}, 99(3):733--740.

\bibitem[Rothman et~al., 2008]{rothman2008sparse}
Rothman, A.~J., Bickel, P.~J., Levina, E., and Zhu, J. (2008).
\newblock Sparse permutation invariant covariance estimation.
\newblock {\em Electronic Journal of Statistics}, 2:494--515.

\bibitem[Schott, 2005]{schott2005matrix}
Schott, J.~R. (2005).
\newblock {\em Matrix Analysis for Statistics}.
\newblock John Wiley \& Sons.

\bibitem[Shao et~al., 2011]{shao2011sparse}
Shao, J., Wang, Y., Deng, X., and Wang, S. (2011).
\newblock Sparse linear discriminant analysis by thresholding for high
  dimensional data.
\newblock {\em The Annals of Statistics}, 39(2):1241--1265.

\bibitem[Sun and Zhang, 2013]{sun2013sparse}
Sun, T. and Zhang, C.~H. (2013).
\newblock Sparse matrix inversion with scaled lasso.
\newblock {\em The Journal of Machine Learning Research}, 14(1):3385--3418.

\bibitem[Taeb and Chandrasekaran, 2017]{taeb2016interpreting}
Taeb, A. and Chandrasekaran, V. (2017).
\newblock Interpreting latent variables in factor models via convex
  optimization.
\newblock {\em Mathematical Programming}.
\newblock \mbox{doi}: 10.1007s10107-017-1187-7.

\bibitem[T{\"u}t{\"u}nc{\"u} et~al., 2003]{tutuncu2003solving}
T{\"u}t{\"u}nc{\"u}, R.~H., Toh, K.~C., and Todd, M.~J. (2003).
\newblock Solving semidefinite-quadratic-linear programs using sdpt3.
\newblock {\em Mathematical programming}, 95(2):189--217.

\bibitem[Vershynin, 2012]{vershynin2012close}
Vershynin, R. (2012).
\newblock How close is the sample covariance matrix to the actual covariance
  matrix?
\newblock {\em Journal of Theoretical Probability}, 25(3):655--686.

\bibitem[Wu et~al., 2016]{ourQDA}
Wu, Y., Qin, Y., and Zhu, M. (2016).
\newblock Quadratic discriminant analysis for high-dimensional data.
\newblock {\em Statistica Sinica}.
\newblock \mbox{doi}: 10.5705/ss.202016.0034.

\bibitem[Xue et~al., 2012]{xue2012positive}
Xue, L., Ma, S., and Zou, H. (2012).
\newblock Positive-definite $\ell_1$-penalized estimation of large covariance
  matrices.
\newblock {\em Journal of the American Statistical Association},
  107(500):1480--1491.

\bibitem[Yuan, 2010]{yuan2010high}
Yuan, M. (2010).
\newblock High dimensional inverse covariance matrix estimation via linear
  programming.
\newblock {\em The Journal of Machine Learning Research}, 11:2261--2286.

\bibitem[Yuan and Lin, 2007]{yuan2007model}
Yuan, M. and Lin, Y. (2007).
\newblock Model selection and estimation in the gaussian graphical model.
\newblock {\em Biometrika}, 94(1):19--35.

\end{thebibliography}


\clearpage
\begin{appendix}
\setcounter{equation}{10}


\subsection*{Proof of Theorem~\ref{thm:fr-Consistency}}
\begin{proof}{Proof}{}%
We use the framework of the proof for the consistency of the sparse precision matrix estimator in \cite{rothman2008sparse}. In spite of the similar framework, our proof is essentially different from theirs in that we are to establish consistency for estimators with the ``low-rank + diagonal'' matrix structure.

To study the solution of the optimization problem (\ref{eq:fr-mle}), we firstly recall the search space,
\[
\mathbf{F}^r=\{\Theta \mid L \in \mathbf{S}^{p,r}_{+}, D \in \mathbf{D}_{++}^p \text{ and } \Theta = -L+D\}.
\]
Base on that, we define another set
\[
\mathbf{E}^r=\{\Delta \mid \Delta=\Theta-\Theta_0, \Theta \in \mathbf{F}^r\},
\]
which can be thought as a ``centered'' version of $\mathbf{F}^r$. As $r \geq r_0$ is assumed in this theorem, we straightforwardly have $\Theta_0 \in \mathbf{F}^r$ and $0 \in \mathbf{E}^r$. 

Let $f(\Theta)=\trace(\Theta S)-\log|\Theta|$ be the value of the objective function at $\Theta$, and $F(\Delta)=f(\Theta_0+\Delta)-f(\Theta_0)$. Let $\hat{\Delta}_r=\hat{\Theta}_r-\Theta_0$, we can prove the desired result
\begin{align}
\label{eq:fr-thm-result}
\|\hat{\Delta}_r\|_F \leq M\max(a_{n,p,r},b_{n,p}),
\end{align}
for some constant $M$, by proving
\begin{align}
\label{eq:F-Delta>0 for Delta in M}
F(\Delta)>F(0)=0 \text{ for all } \Delta \in \mathbf{M}^{2r},
\end{align}
in which
\begin{align*}
\mathbf{M}^{2r}=\mathbf{E}^{2r} \cap \{\Delta \mid \|\Delta\|_F=M\max{(a_{n,p,r},b_{n,p})}\}
\cap \{\Delta \mid \|\Delta\|_{op} \leq C_1\},
\end{align*}
and $C_1$ is a constant so that $\|\hat{\Delta}_r\|_{op} \leq C_1~(r=1,\ldots,p)$. The existence of $C_1$ is validated by Lemma 1.

To clarify this, we show it leads to contradiction if (\ref{eq:F-Delta>0 for Delta in M}) is true while (\ref{eq:fr-thm-result}) is not. As $\|\hat{\Delta}_r\|_F > M\max(a_{n,p,r},b_{n,p})$ and $\|0\|_F < M\max(a_{n,p,r},b_{n,p})$, there exists a real number $0<t<1$ so that $\|(1-t)0+t\hat{\Delta}_r\|_F =M\max(a_{n,p,r},b_{n,p})$. As $\hat{\Delta}_r \in \mathbf{E}^r$ and $0 \in \mathbf{E}^r$, we have $(1-t)0+t\hat{\Delta}_r \in \mathbf{E}^{2r}$. As $\|\hat{\Delta}_r\|_{op}\leq C_1$ by Lemma 1, we have $\|(1-t)0+t\hat{\Delta}_r\|_{op} \leq C_1$. Therefore, $(1-t)0+t\hat{\Delta}_r \in \mathbf{M}^{2r}$ and $F\{(1-t)0+t\hat{\Delta}_r\}>0$ by (\ref{eq:F-Delta>0 for Delta in M}). However, as $\hat{\Delta}_r$ minimizes $F(\Delta)$ and $F(\hat{\Delta}_r)\leq 0$, we also have
\begin{align*}
F\left\{(1-t)0+t\hat{\Delta}_r\right\} \leq (1-t)F(0)+tF(\hat{\Delta}_r) \leq 0
\end{align*}
by convexity of $F(\Delta)$, and this leads to contradiction. 

The remaining work is to prove (\ref{eq:F-Delta>0 for Delta in M}). 

For any $\Delta \in \mathbf{M}^{2r}$, we have
\begin{eqnarray}
\label{eq:fr-F-delta}
F(\Delta)&=&\trace\left\{(\Theta_0+\Delta)S\right\}-\log|\Theta_0+\Delta|-\left\{\trace(\Theta_0 S)-\log|\Theta_0|\right\}\nonumber\\
&=&\trace(\Delta S)-\{\log|\Theta_0+\Delta|-\log|\Theta_0|\}.
\end{eqnarray}
The bound of the second term in (\ref{eq:fr-F-delta}) is irrelevant to the assumed structure of the matrix; according to \cite{rothman2008sparse} and the definition of $\mathbf{M}^{2r}$.
\begin{eqnarray}
\label{eq:logdet expansion}
\log|\Theta_0+\Delta|-\log|\Theta_0|
&\leq &
\trace(\Sigma_0\Delta)-(\|\Theta_0\|_{op}+\|\Delta\|_{op})^{-2}\|\Delta\|_F^2\nonumber\\
& \leq &
\trace(\Sigma_0\Delta)-(c_1^{-1}+C_1)^{-2}\|\Delta\|_F^2.
\end{eqnarray}
We write $C_2=(c_1^{-1}+C_1)^{-2}$. With (\ref{eq:logdet expansion}) plugged into (\ref{eq:fr-F-delta}), we obtain
\begin{eqnarray}
\label{eq:fr-F-delta-2}
F(\Delta) &\geq & C_2\|\Delta\|_F^2+\trace\left\{\Delta(S-\Sigma_0)\right\}.
\end{eqnarray}

Now we derive the bound of $\trace\{\Delta(S-\Sigma_0)\}$ in (\ref{eq:fr-F-delta-2}). We notice that any $\Delta \in \mathbf{E}^{2r}$ can be written as $\Delta=-(L-L_0)+D-D_0$, in which $-(L-L_0) \in \mathbf{S}^{p,3r}$ and $D-D_0 \in \mathbf{D}^p$. By Lemma 2, $\Delta$ can also be decomposed as $\Delta=L_{\Delta}+D_{\Delta}$, so that $L_{\Delta} \in \mathbf{S}^{p,9r}$, $D_{\Delta} \in \mathbf{D}^p$ and $\|\Delta\|_F^2 \geq C_3\left(\|L_{\Delta}\|_F^2+\|D_{\Delta}\|_F^2\right)$ for some constant $C_3$. 
We consider the absolute value,
\begin{eqnarray}
\label{eq:fr-F-delta-tr}
|\trace\left\{\Delta(S-\Sigma_0)\right\}| &\leq &
|\trace\left\{L_{\Delta}(S-\Sigma_0)\right\}|+|\trace\left\{D_{\Delta}(S-\Sigma_0)\right\}|\nonumber\\
&\leq &
\|L_\Delta\|_{*}\|S-\Sigma_0\|_{op}+\|D_{\Delta}\|_{F}\left\{\sum\nolimits_{j=1}^p(s_{jj}-\sigma_{0jj})^2\right\}^{1/2}\nonumber\\
&\leq &
(9r)^{1/2}\|L_{\Delta}\|_F\|S-\Sigma_0\|_{op}+p^{1/2}\|D_\Delta\|_F\underset{1 \leq j \leq p}{\max}|s_{jj}-\sigma_{0jj}|,\nonumber\\
\end{eqnarray}
in which $s_{jj}$ and $\sigma_{0jj}$ are the $j$th diagonal elements in $S$ and $\Sigma_{0}$ respectively. The second inequality is because of the property of dual norm \citep{recht2010guaranteed}. The last inequality uses inequalities regarding different matrix norms \citep{recht2010guaranteed, rothman2008sparse}.

Under the normality assumption, with probability tending to $1$, the sample covariance matrix $S$ satisfies
\begin{align}
\label{eq:S orders}
\underset{1 \leq j \leq p}{\max}|s_{jj}-\sigma_{0jj}| \leq  C_4(\log{p}/n)^{1/2},\quad
\|S-\Sigma_{0} \|_{op} \leq C_4 (p/n)^{1/2},
\end{align}
for some constant $C_4$. The first inequality is by Lemma 1 in \cite{rothman2008sparse}, and the second inequality is by Proposition 2.1 in  \cite{vershynin2012close}.

Combine (\ref{eq:fr-F-delta-tr}) and (\ref{eq:S orders}), we have
\begin{eqnarray}
\label{eq:fr-F-delta-tr-2}
|\trace\left\{\Delta(S-\Sigma_0)\right\}| &\leq &
C_5(\|L_{\Delta}\|_F+\|D_\Delta\|_F)\max(a_{n,p,r},b_{n,p}),
\end{eqnarray}
for some constant $C_5$.

By (\ref{eq:fr-F-delta-2}), (\ref{eq:fr-F-delta-tr-2}) and $\|\Delta\|_F^2 \geq C_3\left(\|L_{\Delta}\|_F^2+\|D_{\Delta}\|_F^2\right)$, 
\begin{eqnarray}
\label{eq:fr-F-delta-final}
F(\Delta) 
&\geq &
C_2\|\Delta\|_F^2-C_5(\|L_{\Delta}\|_F+\|D_\Delta\|_F)\max(a_{n,p,r},b_{n,p})\nonumber\\
&\geq &
C_2\|\Delta\|_F^2-C_5\max(a_{n,p,r},b_{n,p})\left\{2\left(\|L_{\Delta}\|_F^2+\|D_\Delta\|_F^2\right)\right\}^{1/2}\nonumber\\
&\geq &
C_2\|\Delta\|_F^2-C_6\max(a_{n,p,r},b_{n,p})\|\Delta\|_F\nonumber\\
&=&\|\Delta\|_F^2\left\{C_2-C_6\max(a_{n,p,r},b_{n,p})\|\Delta\|_F^{-1}\right\}\nonumber\\
&=&\|\Delta\|_F^2\left(C_2-C_6/M\right)\nonumber\\
&>&0,
\end{eqnarray}
for sufficiently large constant $M$. Constant $C_6$ depends on $C_3$ and $C_5$. 
This completes the proof.
\end{proof}

\subsection*{Proof of Theorem~\ref{thm:r<r0 consistency}}
\begin{proof}{Proof}{}%
Recall that $d_{r,r_0}= \underset{\Theta \in \mathbf{F}^r}{\min}\|\Theta-\Theta_0\|_F$ and $\Theta_r$ is a matrix in $\mathbf{F}^r$ so that $\|\Theta_r-\Theta_0\|_F=d_{r,r_0}$. As
\begin{eqnarray*}
\|\hat{\Theta}_r-\Theta_0\|_F &\leq & \|\hat{\Theta}_r-\Theta_r\|_F +\|\Theta_r-\Theta_0\|_F \\
&=& \|\hat{\Theta}_r-\Theta_r\|_F+d_{r,r_0} \\
&=& \|\hat{\Theta}_r-\Theta_r\|_F+O\left\{\max(a_{n,p,r_0},b_{n,p})\right\},
\end{eqnarray*}
we only need to prove $\|\hat{\Theta}_r-\Theta_r\|_F=O_p\{\max(a_{n,p,r_0},b_{n,p})\}$. 

We use similar technique as in the proof of Theorem~\ref{thm:fr-Consistency}.

Let $f(\Theta)=\trace(\Theta S)-\log|\Theta|$ be the value of the objective function at $\Theta$, and $F_r(\Delta)=f(\Theta_r+\Delta)-f(\Theta_r)$. To obtain the desired result
$\|\hat{\Theta}_r-\Theta_r\|_F\leq M\max(a_{n,p,r_0},b_{n,p})$
for some constant $M$, it is sufficient to prove 
\begin{align}
\label{eq:F-Delta>0 for Delta in M small d}
F_r(\Delta)>F_r(0)=0 \text{ for all } \Delta \in \mathbf{M}^{2r}_r,
\end{align}
in which 
\begin{align*}
\mathbf{M}^{2r}_r
=\{\Delta \mid \Delta=\Theta-\Theta_r, \Theta \in \mathbf{F}^{2r},
\|\Delta\|_F=M\max{(a_{n,p,r_0},b_{n,p})},
\|\Delta\|_{op} \leq C_7\}.
\end{align*}
The constant $C_7$ is defined as follows. As
\[
\|\hat{\Theta}_r-\Theta_r\|_{op} \leq 
\|\hat{\Theta}_r-\Theta_0\|_{op}+\|\Theta_r-\Theta_0\|_F \leq 
C_1+d_{r,r_0},
\]
and $d_{r,r_0} \to 0$, we define $C_7 = 2C_1$ and gaurantee $\|\hat{\Theta}_r-\Theta_r\|_{op} \leq C_7$. Afterwards, the reasoning of the sufficiency of (\ref{eq:F-Delta>0 for Delta in M small d}) is the same as that of the sufficiency of (\ref{eq:F-Delta>0 for Delta in M}), and is omitted.


Now, we prove (\ref{eq:F-Delta>0 for Delta in M small d}).

For any $\Delta \in \mathbf{M}^{2r}_r$, by similar argument as for (\ref{eq:fr-F-delta-2}) and $\|\Theta_r-\Theta_0\|_F=d_{r,r_0}$, with $C_9$ based on $C_7$, we have
\begin{eqnarray}
\label{eq:fr-F-Delta-small-d}
F_r(\Delta) &\geq & C_9\|\Delta\|_F^2+\trace\left\{\Delta(S-\Sigma_r)\right\} \nonumber\\
&=& C_9\|\Delta\|_F^2+\trace\left\{\Delta(S-\Sigma_0)\right\}+\trace\left\{\Delta(\Sigma_0-\Sigma_r)\right\} \nonumber\\
&\geq &C_9\|\Delta\|_F^2+\trace\left\{\Delta(S-\Sigma_0)\right\}-\|\Delta\|_F\|\Sigma_r-\Sigma_0\|_F\nonumber\\
&\geq &C_9\|\Delta\|_F^2+\trace\{\Delta(S-\Sigma_0)\}-C_{10}\|\Delta\|_Fd_{r,r_0},
\end{eqnarray}
for some constant $C_{10}$. The second last inequality is because of Cauchy--Schwarz inequality, and the last inequality uses $\|\Sigma_r-\Sigma_0\|_F =\|\Theta_r^{-1}-\Theta_0^{-1}\|_F \leq C_{10}\|\Theta_r-\Theta_0\|_F$, which can be derived by Taylor expansion. 

By similar argument as from (\ref{eq:fr-F-delta-tr}) to (\ref{eq:fr-F-delta-final}), for $\Delta \in \mathbf{M}^{2r}_r$
\begin{eqnarray}
\label{eq:fr-F-Delta-tr-small-d}
|\trace\{\Delta(S-\Sigma_0)\}| &\leq &
C_{11}\|\Delta\|_F\max(a_{n,p,r_0},b_{n,p}).
\end{eqnarray}

By (\ref{eq:fr-F-Delta-small-d}), (\ref{eq:fr-F-Delta-tr-small-d}) and $d_{r,r_0}=O\left(\max{(a_{n,p,r_0},b_{n,p})}\right)$, with some constant $C_{12}$ based on $C_{10}$ and $C_{11}$,
\begin{eqnarray*}
F_r(\Delta)&\geq &C_9\|\Delta\|_F^2-C_{12}\|\Delta\|_F\max(a_{n,p,r_0},b_{n,p})\nonumber\\
&>&0
\end{eqnarray*}
for sufficiently large $M$. 

This completes the proof.
\end{proof}

\subsection*{Proof of Theorem~\ref{thm:FR-penalty-consistency}}
\begin{proof}{Proof}{}%
Let $f(\Theta)=\trace(\Theta S)-\log{|\Theta|}$, $\hat{\Delta}_r=\hat{\Theta}_r-\Theta_0$ and $F(\hat{\Delta}_r)=f(\hat{\Theta}_r)-f(\Theta_0)$. The objective function in (\ref{eq:rank-MLE}) becomes $f(\hat{\Theta}_r)+\tau(r)$ when $\rank{(L)}$ is fixed to be $r$. 

The discussion in Section \ref{sec:rank-penalty-general} shows that, the convergence rate in Theorem~\ref{thm:FR-penalty-consistency} is already true for $r \in \mathbf{A}_2 \cup \mathbf{A}_3 \cup \{r_0\}$. Thus, if we can prove $f(\hat{\Theta}_r)+\tau(r)>f(\hat{\Theta}_{r_0})+\tau(r_0)$ for all $r \in \mathbf{A}_1 \cup \mathbf{A}_4$ so that these ranks will not be selected, the proof of the theorem will be completed.

For a particular $r \neq r_0$, $\tau(r)$ and $\tau(r_0)$ are both fixed; therefore, all we need is a lower bound of $f(\hat{\Theta}_r)-f(\hat{\Theta}_{r_0})$. We firstly develop a general lower bound, and then discuss $r \in \mathbf{A}_1$ and $r \in \mathbf{A}_4$ separately.

As $f(\Theta_0) \geq f(\hat{\Theta}_{r_0})$, we have 
\[
f(\hat{\Theta}_r)-f(\hat{\Theta}_{r_0}) \geq f(\hat{\Theta}_r)-f(\Theta_0)=F(\hat{\Delta}_r);
\]
and it is sufficient if we have a lower bound for 
\begin{align}
\label{eq:thm4-F-Delta}
F(\hat{\Delta}_r)
=\trace(\hat{\Delta}_rS)-\{\log|\Theta_0+\hat{\Delta}_r|-\log|\Theta_0|\}.
\end{align}

With similar argument as (\ref{eq:logdet expansion}), we have
\begin{eqnarray}
\label{eq:thm4-logdet}
\log|\Theta_0+\hat{\Delta}_r|-\log|\Theta_0|
&\leq &
\trace(\Sigma_0\hat{\Delta}_r)-(\|\Theta_0\|_{op}+\|\hat{\Delta}_r\|_{op})^{-2}\|\hat{\Delta}\|_F^2\nonumber\\
& \leq &
\trace(\Sigma_0\hat{\Delta}_r)-(c_1^{-1}+C_1)^{-2}\|\hat{\Delta}_r\|_F^2.
\end{eqnarray}
Just to clarify, although look alike, the bound of $\|\Delta\|_{op}$ in (\ref{eq:logdet expansion}) is due to the definition of $\mathbf{M}^{2r}$, whereas the bound of $\|\hat{\Delta}_r\|_{op}$ in (\ref{eq:thm4-logdet}) is because $\|\hat{\Delta}_r\|_{op} \leq C_1~(r=1,\ldots,p)$ by Lemma 1.

Plug (\ref{eq:thm4-logdet}) into (\ref{eq:thm4-F-Delta}), we have
\begin{eqnarray}
\label{eq:fr-penalty-F-delta}
F(\hat{\Delta}_r) &\geq & C_{2}\|\hat{\Delta}_r\|_F^2+\trace \{\hat{\Delta}_r(S-\Sigma_0)\}.
\end{eqnarray}

Let $\hat{L}_{r}$ and $\hat{D}_r$ be the low-rank matrix component and diagonal matrix component of $\hat{\Theta}_r$ respectively, we have $\hat{\Delta}_r=-(\hat{L}_{r}-L_0)+(\hat{D}_r-D_0)$, in which $-(\hat{L}_{r}-L_0) \in \mathbf{S}^{p,r+r_0}$ and $\hat{D}_r-D_0 \in \mathbf{D}^p$. By Lemma 2, $\hat{\Delta}_r$ can also be written as $\hat{\Delta}_r=L_{\hat{\Delta}_r}+D_{\hat{\Delta}_r}$, in which $L_{\hat{\Delta}_r} \in \mathbf{S}^{p,3(r+r_0)}$, $D_{\hat{\Delta}_r} \in \mathbf{D}^p$ and $\|\hat{\Delta}_r\|^2_F \geq C_3(\|L_{\hat{\Delta}_r}\|_F^2+\|D_{\hat{\Delta}_r}\|_F^2)$.  

By similar argument as (\ref{eq:fr-F-delta-tr}) -- (\ref{eq:fr-F-delta-tr-2}), the second part in (\ref{eq:fr-penalty-F-delta}) can be bounded as 
\begin{eqnarray}
\label{eq:fr-penalty-F-delta-tr}
|\trace\{\hat{\Delta}_r(S-\Sigma_0)\}| &\leq &
\left\{3(r+r_0)\right\}^{1/2}\|L_{\hat{\Delta}_r}\|_F\|S-\Sigma_0\|_{op}\nonumber\\
&&+p^{1/2}\|D_{\hat{\Delta}_r}\|_F\underset{1\leq j \leq p}{\max}|s_{jj}-\sigma_{0jj}|\nonumber\\
&\leq &
C_{14}\|\hat{\Delta}_r\|_F\max{\{a_{n,p,(r+r_0)},b_{n,p}\}}.
\end{eqnarray}
for some constant $C_{14}$.

Plug (\ref{eq:fr-penalty-F-delta-tr}) into (\ref{eq:fr-penalty-F-delta}), we have
\begin{eqnarray}
\label{eq:fr-penalty-F-delta-2}
F(\hat{\Delta}_r) &\geq & 
C_2\|\hat{\Delta}_r\|_F^2
-C_{14}\|\hat{\Delta}_r\|_F\max{\{a_{n,p,(r+r_0)},b_{n,p}\}}.
\end{eqnarray}

With the general lower bound of $F(\hat{\Delta}_r)$ obtained, we now consider $r \in \mathbf{A}_1$.

When $r \in \mathbf{A}_1$, as $r < r_0$, we replace the $a_{n,p,(r+r_0)}$ in (\ref{eq:fr-penalty-F-delta-2}) with $a_{n,p,r_0}$, and obtain
\begin{eqnarray}
\label{eq:fr-penalty-F-delta-3}
F(\hat{\Delta}_r) &\geq & 
C_2\|\hat{\Delta}_r\|_F^2
-C_{15}\|\hat{\Delta}_r\|_F\max{(a_{n,p,r_0},b_{n,p})},
\end{eqnarray}
for some constant $C_{15}$. By the definition of $\mathbf{A}_1$, we can represent $d_{r,r_0}$ as \[d_{r,r_0}=\eta_{n,p,r_0}\max(a_{n,p,r_0},b_{n,p})\] for some $\eta_{n,p,r_0} \to \infty$. By the definition of the distance $d_{r,r_0}$, we have $\|\hat{\Delta}_r\|_F \geq d_{r,r_0}$. With these facts, (\ref{eq:fr-penalty-F-delta-3}) can be simplified as 
\begin{eqnarray}
\label{eq:fr-penalty-F-delta-r<r0}
F(\hat{\Delta}_r) &\geq & 
\|\hat{\Delta}_r\|_F^2\left\{
C_2-C_{15}\|\hat{\Delta}_r\|_F^{-1}\max(a_{n,p,r_0},b_{n,p})
\right\}\nonumber\\
&\geq&
\|\hat{\Delta}_r\|_F^2\left(
C_2-C_{15}\eta_{n,p,r_0}^{-1}
\right)\nonumber\\
&\geq&
C_2\|\hat{\Delta}_r\|_F^2/2\nonumber\\
&\geq & C_2d_{r,r_0}^2/2,
\end{eqnarray}
when $n$ and $p$ are sufficiently large.

By (\ref{eq:fr-penalty-F-delta-r<r0}) and Condition \ref{(C.3)}, we have
\begin{eqnarray}
\label{eq:fr-penalty-final-r<r0}
&&\left\{f(\hat{\Theta}_r)+\tau(r)\right\}-\left\{f(\hat{\Theta}_{r_0})+\tau(r_0)\right\}\nonumber\\
& \geq & 
C_2d_{r,r_0}^2/2+\tau(r)-\tau(r_0)\nonumber \\
&> & 0,
\end{eqnarray}
when $n$ and $p$ are sufficiently large.

When $r \in \mathbf{A}_4$, the $a_{n,p,(r+r_0)}$ in (\ref{eq:fr-penalty-F-delta-2}) can be replaced with $a_{n,p,r}$, and we obtain
\begin{eqnarray*}
\label{eq:fr-penalty-F-delta-4}
F(\hat{\Delta}_r) &\geq & 
C_2\|\hat{\Delta}_r\|_F^2
-C_{15}\|\hat{\Delta}_r\|_F\max{(a_{n,p,r},b_{n,p})}.
\end{eqnarray*}
As $\mathbf{A}_4$ is defined so that $r/\max(r_0,\log{p})\to \infty$, we have $a_{n,p,r}/b_{n,p} \to \infty$ and
\begin{eqnarray}
\label{eq:fr-penalty-F-delta-5}
F(\hat{\Delta}_r) &\geq & 
C_2\|\hat{\Delta}_r\|_F^2
-C_{15}\|\hat{\Delta}_r\|_Fa_{n,p,r}.
\end{eqnarray}
The right hand side of the inequality in (\ref{eq:fr-penalty-F-delta-5}) is quadratic in $\|\hat{\Delta}_r\|_F$ and can be minimized analytically. Thus, (\ref{eq:fr-penalty-F-delta-5}) is bounded as
\begin{eqnarray}
\label{eq:fr-penalty-F-delta-r>r0}
F(\hat{\Delta}_r) &\geq & -C_{16}a_{n,p,r}^2,
\end{eqnarray}
in which $C_{16}$ is some positive constant based on $C_2$ and $C_{15}$. ,

By (\ref{eq:fr-penalty-F-delta-r>r0}) and Condition \ref{(C.4)}, we have
\begin{eqnarray}
\label{eq:fr-penalty-final-r>r0}
&&\left\{f(\hat{\Theta}_r)+\tau(r)\right\}-\left\{f(\hat{\Theta}_{r_0})+\tau(r_0)\right\}\nonumber\\
& \geq & 
-C_{16}a_{n,p,r}^2+\tau(r)-\tau(r_0)\nonumber \\
&> & 0,
\end{eqnarray}
when $n$ and $p$ are sufficiently large.

Results (\ref{eq:fr-penalty-final-r<r0}) and (\ref{eq:fr-penalty-final-r>r0}) together complete the proof.
\end{proof}

\section*{Lemmas and proof of lemmas}

This part of the supplementary material contains some lemmas and their proofs. Lemma 1 and Lemma 2 are repeatedly used in the proof of Theorem~\ref{thm:fr-Consistency} -- Theorem~\ref{thm:FR-penalty-consistency}; Lemma 3 is a useful result for the proof of Lemma 2; Lemma 4 is used to justify Algorithm \ref{algo}. 

\begin{lemma}{Lemma 1.}{}
Let $\hat{\Theta}_r$ be the solution of the low-rank and diagonal matrix decomposition when the rank is fixed to be $r$,
\begin{eqnarray}
\hat{\Theta}_r&=&\underset{\Theta}{\argmin}\{\trace(\Theta S)-\log|\Theta|\},\nonumber\\
\text{subject to}&& \Theta=-L+D,~
\Theta \in \mathbf{S}^{p}_{+},~
L \in \mathbf{S}^{p,r}_{+}, ~
D \in \mathbf{D}^p,
\label{eq:bound operator norm}
\end{eqnarray}
in which $S$ is the sample covariance matrix, we have $\|\hat{\Theta}_r-\Theta_0\|_{op}<C$ for some constant $C$, with probability tending to $1$.
	\label{lemma:HatThetar-Theta}
\end{lemma}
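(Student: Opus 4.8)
The plan is to reduce everything to an a priori upper bound on $\lambda_{\max}(\hat{\Theta}_r)$ that is free of $r$, and then read off the operator-norm bound. Writing $\hat{\Theta}_r=\hat{D}_r-\hat{L}_r$ with $\hat{L}_r\in\mathbf{S}^{p,r}_+$, Remark 1 gives $\hat{\Theta}_r\succ 0$, so $0\preceq\hat{\Theta}_r\preceq\hat{D}_r$ and $\lambda_{\max}(\hat{\Theta}_r)\le\max_j(\hat{D}_r)_{jj}$. Since Condition \ref{(C.1)} controls the spectrum of $\Theta_0$, once I establish $\lambda_{\max}(\hat{\Theta}_r)\le C'$ I get at once $\|\hat{\Theta}_r-\Theta_0\|_{op}\le\lambda_{\max}(\hat{\Theta}_r)+\lambda_{\max}(\Theta_0)\le C'+c_1^{-1}$, which is the claim with $C=C'+c_1^{-1}$. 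So the entire lemma reduces to an upper bound on the largest eigenvalue of $\hat{\Theta}_r$, equivalently a constant lower bound on $\lambda_{\min}(\hat{\Sigma}_r)$ where $\hat{\Sigma}_r=\hat{\Theta}_r^{-1}$.

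Next I would record the consequences of stationarity. Because the diagonal entries of $D$ are free, the first-order condition in the diagonal directions forces $\diag(\hat{\Sigma}_r)=\diag(S)$, and together with the analytic update of $L$ given $D$ (Lemma 4) this yields the representation $\hat{\Sigma}_r=\hat{D}_r^{-1/2}A\,\hat{D}_r^{-1/2}$, in which $A$ is obtained from $G=\hat{D}_r^{1/2}S\hat{D}_r^{1/2}$ by keeping its top $r$ eigenvalues and replacing the remaining ones by the floor value $1$ (flooring any of the top $r$ that fall below $1$ up to $1$ as well), and it also gives the clean identity $\trace(\hat{\Theta}_rS)=p$.

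The analysis then splits by regime. The rates $a_{n,p,r_0},b_{n,p}$ are informative only when $p\log p=o(n)$, hence $p=o(n)$; in that regime \eqref{eq:S orders} gives $\|S-\Sigma_0\|_{op}=O_p\{(p/n)^{1/2}\}=o_p(1)$, so by Condition \ref{(C.1)}, with probability tending to $1$, $S$ is well conditioned, $\lambda_{\min}(S)\ge c_1/2$ and $\max_j s_{jj}\le 2c_2$. The core observation is that the eigenvalue-flooring producing $A$ from $G$ only raises eigenvalues in the directions the rank-$r$ model actually captures, so that $A\succeq G$ and hence $\hat{\Sigma}_r\succeq\hat{D}_r^{-1/2}G\,\hat{D}_r^{-1/2}=S$. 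This gives $\lambda_{\min}(\hat{\Sigma}_r)\ge\lambda_{\min}(S)\ge c_1/2$, hence $\lambda_{\max}(\hat{\Theta}_r)\le 2/c_1$, an absolute constant, completing the reduction. The comparison $f(\hat{\Theta}_r)\le f(\hat{D}_0)=p+\sum_j\log s_{jj}$ against the diagonal estimator $\hat{D}_0=\diag\{s_{11}^{-1},\ldots,s_{pp}^{-1}\}$, combined with $\trace(\hat{\Theta}_rS)=p$, supplies the complementary control.

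The main obstacle is exactly the last point: when $r$ is smaller than the number of ``signal'' eigenvalues of the whitened sample covariance $G$, the top-$r$ truncation discards eigenvalues exceeding the floor $1$, the inequality $A\succeq G$ can fail, and the naive argument only returns $\lambda_{\min}(\hat{\Sigma}_r)\ge 1/\max_j(\hat{D}_r)_{jj}$. The delicate step is therefore to bound $\max_j(\hat{D}_r)_{jj}$, equivalently $\lambda_{\max}(G)$, uniformly over $r=1,\ldots,p$; I would attempt this by combining the pinned diagonal $(\hat{\Sigma}_r)_{jj}=s_{jj}$ (which forces the rank-$r$ positive semi-definite excess $\hat{\Sigma}_r-\hat{D}_r^{-1}$ to have diagonal entries at most $s_{jj}\le 2c_2$) with the identity $\trace(\hat{\Theta}_rS)=p$ and the well-conditioning of $S$. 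Since the controlling event is a single, $r$-free concentration event for $S$, the uniformity over $r$ becomes routine once this deterministic spectral bound is secured.
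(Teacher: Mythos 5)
Your preparatory steps are sound: the reduction of the lemma to an upper bound on $\lambda_{\max}(\hat{\Theta}_r)$, the stationarity identities $\diag(\hat{\Sigma}_r)=\diag(S)$ and $\trace(\hat{\Theta}_rS)=p$ (the latter via the feasible scaling ray $t\Theta$), and the Lemma 4 representation $\hat{\Sigma}_r=\hat{D}_r^{-1/2}A\hat{D}_r^{-1/2}$ with $A$ obtained from $G=\hat{D}_r^{1/2}S\hat{D}_r^{1/2}$ by flooring are all correct. The genuine gap is exactly where you flag it, and it is not a corner case but the main case: $A\succeq G$ holds if and only if every discarded eigenvalue satisfies $w_i\leq 1$ for $i>r$, and for small $r$ this typically fails. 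Already at $r=0$, $G$ is the sample correlation matrix, whose leading eigenvalue exceeds $1$ whenever any correlation is present, and correspondingly $\hat{\Sigma}_0=\diag(S)\not\succeq S$ in general. Since the paper needs Lemma 1 uniformly over $r=1,\ldots,p$ --- in particular for $r<r_0$ in Theorems 2 and 3, precisely the truncating regime --- the favorable case does not carry the lemma.

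Moreover, the fallback you sketch cannot close with the ingredients listed. The pinned diagonal gives $s_{jj}=(\hat{D}_r)_{jj}^{-1}+\{\hat{D}_r^{-1/2}(A-I_p)\hat{D}_r^{-1/2}\}_{jj}$ with a positive semi-definite second term, hence only the \emph{lower} bound $(\hat{D}_r)_{jj}\geq 1/s_{jj}$; nothing in it caps $(\hat{D}_r)_{jj}$, which is what $\lambda_{\max}(\hat{\Theta}_r)\leq\max_j(\hat{D}_r)_{jj}$ requires. And combining $\trace(\hat{\Theta}_rS)=p$ with the likelihood comparison $f(\hat{\Theta}_r)\leq f(\hat{D}_0)$ reduces to $|\hat{\Sigma}_r|\leq\prod_j s_{jj}$, which is automatic from Hadamard's inequality once the diagonal is pinned, so it carries no new information. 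The paper sidesteps all of this with a local perturbation argument rather than a closed-form spectral comparison: if $\lambda_1=\lambda_{\max}(\hat{\Theta}_r)>\lambda_{\max}(S^{-1})$, then $\lambda_1>(t_1^{\T}St_1)^{-1}$ strictly, and using the eigenvalue/eigenvector differentials of Magnus one can choose an infinitesimal diagonal perturbation $dD$ --- feasible, since $L$ is untouched and $D$ remains diagonal --- that keeps $t_1^{\T}St_1$ and the contributions of the remaining eigenvalues fixed while decreasing $\lambda_1$, strictly decreasing the objective and contradicting optimality. This yields $\lambda_{\max}(\hat{\Theta}_r)\leq\lambda_{\max}(S^{-1})\leq 2/c_1$ with probability tending to one, with no case split and no bound on $\hat{D}_r$ needed. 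To rescue your route you would need a deterministic bound of the form $\lambda_{\min}(\hat{D}_r^{-1/2}A\hat{D}_r^{-1/2})\geq c\,\lambda_{\min}(S)$ that survives the discarding of eigenvalues above the floor, and none of the stationarity facts you invoke supplies it.
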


\begin{proof}{Proof}{}
In the following proof, we will use the fact that, with probability tending to $1$,
\begin{eqnarray}
\lambda_{\max}(S^{-1})=\lambda_{\min}^{-1}(S) &\leq& \left\{\lambda_{\min}(\Sigma_0)-c(p/n)^{1/2}\right\}^{-1}\nonumber\\
&\leq & 2/c_1,
\label{eq:bound inv(S)}
\end{eqnarray}
for some constants $c$ and $c_1$, where $c_1$ has been defined in Condition \ref{(C.1)}. 

To prove this lemma,  it suffices to show that 
\begin{align}
\lambda_{\max}(\hat{\Theta}_r) \leq \lambda_{\max}(S^{-1}).
\label{eq:op bound change target}
\end{align}
This is because
\begin{eqnarray*}
\|\hat{\Theta}_r-\Theta_0\|_{op} &\leq&
\|\hat{\Theta}_r-S^{-1}\|_{op}+\|S^{-1}-\Theta_0\|_{op}\\
&\leq& \max\left\{\lambda_{\max}(\hat{\Theta}_r),\lambda_{\max}(S^{-1})\right\}+\max\left\{\lambda_{\max}(S^{-1}),\lambda_{\max}(\Theta_0)\right\}\nonumber\\
&\leq & 4/c_1,
\end{eqnarray*}
The second inequality is due to the fact that $\hat{\Theta}_r$, $S^{-1}$ and $\Theta_0$ are all positive definite. The last inequality is because of  (\ref{eq:op bound change target}) and (\ref{eq:bound inv(S)}).




It remains to show (\ref{eq:op bound change target}). We will prove that, if $\lambda_{\max}(\Theta)>\lambda_{\max}(S^{-1})$ instead (i.e. (\ref{eq:op bound change target}) isn't true), then $\Theta$ must not be the solution to (\ref{eq:bound operator norm}) because the objective function in (\ref{eq:bound operator norm}) can always be further decreased. We conduct this proof in two steps. 

Step 1: If $\lambda_{\max}(\Theta)>\lambda_{\max}(S^{-1})$, the objective function cannot reach its minimum.

Let $\Theta= D-L$ in which $D$ and $L$ are constrained as in (\ref{eq:bound operator norm}). We eigendecompose $\Theta$ as
\[
\Theta = D-L=T\Lambda T^{\T},
\]
in which $T=\left(t_1,\ldots,t_p\right)$ and $\Lambda=diag(\lambda_1,\ldots,\lambda_p)$. Without loss of generality, let the eigenvalues be aligned in descending order.  With basic calculus, the objective function in (\ref{eq:bound operator norm}) can be rewritten as
\begin{align}
\trace(\Lambda T^{\T}ST)-\log|\Lambda|
=\sum_{j=1}^p\left(\lambda_jt_j^{\T}S t_j-\log\lambda_j\right), 
\label{eq:B4-3}
\end{align}
for which the partial differentiation with respect to $\lambda_1$ is $t_1^{\T}S t_1-\lambda_1^{-1}$. Hence, due to convexity, (\ref{eq:B4-3}) may reach its minimum when $\lambda_1=(t_1^{\T}S t_1)^{-1}$. However,  $\lambda_1>(t_1^{\T}S t_1)^{-1}$ strictly because
\[\lambda_1=\lambda_{\max}(\Theta)>\lambda_{\max}(S^{-1})=\lambda_{\min}^{-1}(S)\geq (t_1^{\T}S t_1)^{-1}.\]
Therefore, (\ref{eq:B4-3}) cannot reach its minimum. 

Step 2: Given that $\lambda_1>(t_1^{\T}S t_1)^{-1}$, the objective function can be further decreased if (not only if) we change the $D$ (in $\Theta = D-L$) in a way that both $t_1^{\T}S t_1$ and $\sum_{j=2}^{p}\left(\lambda_j t_j^{\T}S t_j-\log\lambda_j\right)$ remain unchanged but $\lambda_1$ decreases.

We now show that such a change in $D$ does exist.  By employing the results of differentiating eigenvalues and eigenvectors in \cite{magnus1985differentiating}, we have the following three results. First of all
\begin{eqnarray}
d \lambda_1 &=&t_1^{\T}(dD)t_1,
\label{eq:B4-6}
\end{eqnarray}
Secondly, 
\begin{eqnarray}
d\left(t_1^{\T}St_1\right)&=&
2(St_1)^{\T}dt_1\nonumber\\
&=&2(St_1)^{\T}(\lambda_1I_p-\Theta)^+(dD)t_1.
\label{eq:B4-4}
\end{eqnarray}
Lastly, 
\begin{eqnarray}
d\left\{\sum_{j=2}^{p}\left(\lambda_j t_j^{\T}S t_j-\log\lambda_j\right)\right\}&=& \sum_{j=2}^{p}(t_j^{\T}S t_j-\lambda_j^{-1})d\lambda_j
+2\lambda_j(S t_j)^{\T}d t_j \nonumber\\
&=&
\sum_{j=2}^{p}(t_j^{\T}S t_j-\lambda_j^{-1}) t_j^{\T}(dD) t_j \nonumber\\
&&+2\lambda_j(S t_j)^{\T}(\lambda_jI_p-\Theta)^+(dD) t_j,
\label{eq:B4-5}
\end{eqnarray}
in which $dD$ is a diagonal matrix representing an infinitesimal change of $D$ and $(\cdot)^+$ is the Moore-Penrose inverse. Expressions (\ref{eq:B4-6}),(\ref{eq:B4-4}) and (\ref{eq:B4-5}) are all linear with respect to the elements in $dD$ and $t_1 \neq 0$ obviously. Hence, we can surely solve $dD$ from setting (\ref{eq:B4-4}) and (\ref{eq:B4-5}) to be $0$ and (\ref{eq:B4-6}) to be negative. 

In summary, we have shown that if we change $D$ by $dD$, the objective function (\ref{eq:B4-3}) decreases. Therefore, we have proved that $\Theta$ is not the solution to  (\ref{eq:bound operator norm}). This completes the proof.
\end{proof}

\begin{lemma}{Lemma 2.}{}
If a $p \times p$ matrix $M$ can be written as $M=L+D$, in which $L \in \mathbf{S}^{p,r}$ and $D \in \mathbf{D}^p$, then $M$ can also be written as $M=L'+D'$, in which $L' \in \mathbf{S}^{p,3r}$, $D' \in \mathbf{D}^p$ and 
\begin{align*}
\|M\|_F^2 \geq C\left(\|L'\|_F^2+\|D'\|_F^2\right),
\end{align*}
for some positive constant $C$.
\label{lemma:fro-norm-inequality}
\end{lemma}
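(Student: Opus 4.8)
The plan is to build the new decomposition explicitly from the given one, paying for norm control with an increase in rank. The key constraint to notice first is that, because $D'$ must be diagonal, any admissible $L'$ is forced to match $M$ off the diagonal: $\mathrm{offdiag}(L')=\mathrm{offdiag}(M)=\mathrm{offdiag}(L)$, so necessarily $L'=L+\diag(\delta)$ for some diagonal correction $\delta$. The whole difficulty is that the given $L$ may have an enormous diagonal that is cancelled by $D$, so that $\|L\|_F\gg\|M\|_F$; hence taking $L'=L$ is useless. My construction will instead zero out the diagonal of $L$ only on a small set of ``heavy'' rows and leave it untouched elsewhere.

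Concretely, I would take a reduced eigendecomposition $L=U\Lambda U^\T$ with $U^\T U=I$, let $P=UU^\T$ be the orthogonal projection onto the range of $L$, and set $a_i=P_{ii}=\|U_{i\cdot}\|^2$. Then $a_i\in[0,1]$ and $\sum_i a_i=\rank(L)\le r$. Defining the heavy set $H=\{i:a_i>1/2\}$, I put
\[
L'=L-\sum_{i\in H}L_{ii}\,e_ie_i^\T,\qquad D'=M-L'.
\]
Two facts are then immediate. Since $L'$ and $L$ differ only on the diagonal, $D'=M-L'$ is indeed diagonal; and since $\sum_{i\in H}L_{ii}e_ie_i^\T$ is a diagonal matrix of rank at most $|H|$, we have $\rank(L')\le\rank(L)+|H|$. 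The counting step is that $|H|/2<\sum_{i\in H}a_i\le\rank(L)\le r$ forces $|H|<2r$, whence $\rank(L')\le 3r$.

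The analytic heart of the argument, and the step I expect to be the main obstacle, is bounding $\|\diag(L')\|_F^2=\sum_{i\in H^c}L_{ii}^2$ in terms of $\|M\|_F^2$. Here I would use the identity $\sum_{j}L_{ij}^2=(L^2)_{ii}=\sum_k\lambda_k^2u_{ki}^2=:b_i$ together with the Cauchy--Schwarz bound $L_{ii}^2=(\sum_k\lambda_k u_{ki}^2)^2\le(\sum_k\lambda_k^2u_{ki}^2)(\sum_k u_{ki}^2)=a_ib_i$, which needs no sign assumption on the eigenvalues. For a light row ($a_i\le1/2$) this gives $L_{ii}^2\le\tfrac12 b_i=\tfrac12(L_{ii}^2+\sum_{j\ne i}L_{ij}^2)$, i.e.\ $L_{ii}^2\le\sum_{j\ne i}L_{ij}^2$. (This per-row domination of diagonal energy by off-diagonal energy is presumably the content of the auxiliary Lemma~3.) Summing over $i\in H^c$ then yields $\sum_{i\in H^c}L_{ii}^2\le\|\mathrm{offdiag}(L)\|_F^2=\|\mathrm{offdiag}(M)\|_F^2\le\|M\|_F^2$.

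Finally I would collect the pieces. Because $\mathrm{offdiag}(L')=\mathrm{offdiag}(M)$, the orthogonality of diagonal and off-diagonal parts gives $\|L'\|_F^2=\|\mathrm{offdiag}(M)\|_F^2+\|\diag(L')\|_F^2\le2\|M\|_F^2$; and since $D'$ is diagonal, $\|D'\|_F^2\le2\|\diag(M)\|_F^2+2\|\diag(L')\|_F^2\le4\|M\|_F^2$. Adding these, $\|L'\|_F^2+\|D'\|_F^2\le6\|M\|_F^2$, so the claim holds with $C=1/6$. The genuinely nontrivial ingredient is the heavy/light dichotomy: the heavy rows carry the uncontrolled diagonal mass and must be excised, which is exactly what raises the rank from $r$ to at most $3r$, whereas on light rows the Cauchy--Schwarz inequality shows the diagonal energy is automatically dominated by off-diagonal energy, which is in turn pinned to $M$.
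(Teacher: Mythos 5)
Your proof is correct and follows essentially the same route as the paper: you excise the diagonal of $L$ on a small set of at most $2r-1$ bad rows (raising the rank to at most $3r$), show that on the remaining rows the diagonal energy is dominated by the off-diagonal energy, which coincides with that of $M$, and then collect norms. Your heavy set $H=\{i: P_{ii}>1/2\}$ with the trace/Cauchy--Schwarz counting is exactly the leverage-score argument the paper packages as its Lemma~3 (its dominant-column set $\mathbf{B}$ is contained in your $H$), and the only cosmetic differences are that the paper sets $L'_{jj}=M_{jj}/2$ on the bad rows instead of $0$, ending with $C=1/8$ rather than your $C=1/6$.
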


\begin{proof}{Proof}{}
Let $M_{ij}$ and $L_{ij}$ be the entries in the $i$th row and $j$th column of $M$ and $L$ respectively; let $D_{j}$ be the $j$th diagonal entry of D. Similarly, $L'_{ij}$ and $D_{j}'$ are defined. Define the index set $\mathbf{B}=\{j:L_{jj}^2>\sum_{i \neq j}L_{ij}^2\}$. 

According to Lemma 3, the cardinality of $\mathbf{B}$ is at most $2r-1$. We set $L'_{jj}=M_{jj}/2$ for $j \in \mathbf{B}$ and $L'_{ij}=L_{ij}$ for $i \neq j$ and $i=j \notin \mathbf{B}$; $D'$ is set accordingly so that $M=L'+D'$. As at most $2r-1$ diagonal entries of $L'$ are different from those of $L$, $\rank{(L')}<3r$. Now we prove $\|M\|_F^2 \geq C\left(\|L'\|_F^2+\|D'\|_F^2\right)$ for some constant $C$.

We notice two properties: (1) for $j \in \mathbf{B}$, $(L'_{jj})^2=M_{jj}^2/4$; (2) for $j \notin \mathbf{B}$, $(L'_{jj})^{2} \leq \sum_{i \neq j}(L'_{ij})^2=\sum_{i \neq j}M_{ij}^2$. As a result, 

\begin{eqnarray}
\|M\|_{F}^2&=&\sum_{j=1}^pM_{jj}^2+\sum_{j=1}^p\sum_{i\neq j}M_{ij}^2 \nonumber\\
&\geq & 4\sum_{j \in \mathbf{B}}(L'_{jj})^2+\sum_{j=1}^p\sum_{i\neq j}(L'_{ij})^2 \nonumber\\
&\geq & 1/2\left\{
\sum_{j \in \mathbf{B}}(L'_{jj})^2
+2\sum_{j=1}^p\sum_{i\neq j}(L'_{ij})^2
\right\} \nonumber\\
&\geq & 1/2\left\{\sum_{j \in \mathbf{B}}(L'_{jj})^2+\sum_{j \notin \mathbf{B}}(L_{jj}')^2+\sum_{j=1}^p\sum_{i \neq j}(L_{ij}')^2\right\} \nonumber\\
&=&1/2\|L'\|_F^2.
\label{eq:L vs M}
\end{eqnarray}
The first inequality is because of property (1) and the third inequality is because of property (2).

Finally, by (\ref{eq:L vs M}) and $\|D'\|_F  \leq  \|M\|_F + \|L'\|_F$, we have
\begin{eqnarray*}
\|D'\|_F^2 & \leq & 2(\|M\|_F^2 + \|L'\|^2_F)\\
&\leq & 6\|M\|_F^2,
\end{eqnarray*}
we have $\|L'\|_F^2+\|D'\|_F^2 \leq 8\|M\|_F^2$
and 
$
\|M\|_F^2 \geq C(\|L'\|_F^2+\|D'\|_F^2)
$
for $C=1/8$.
This completes the proof.
\end{proof}

\begin{lemma}{Lemma 3.}{}
Let $A$ be a $p \times p$ matrix with $\rank(A) = r$ ($r \leq p$) and $a_{ij}$ be the element in the $i$th row and $j$th column, the number of column vectors in $A$ that satisfy $a_{jj}^2>\sum_{i \neq j}a_{ij}^2$ is at most $2r-1$.
\label{lemma:l2 diagonal dominance}
\end{lemma}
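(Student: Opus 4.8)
The plan is to avoid the classical ``diagonal dominance implies nonsingularity'' route, which is not available here: the hypothesis $a_{jj}^2 > \sum_{i \neq j} a_{ij}^2$ is an $\ell_2$-type column dominance, and (unlike the $\ell_1$ version) it does \emph{not} force the dominating columns to be linearly independent. So I would not try to argue that the dominating columns have full rank. Instead, I would count dimensions by projecting the standard coordinate vectors onto the column space of $A$ and reading off a trace identity.

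Set $\mathbf{B} = \{j : a_{jj}^2 > \sum_{i \neq j} a_{ij}^2\}$ and let $k = |\mathbf{B}|$; the target is $k \leq 2r-1$. Writing $c_j$ for the $j$th column of $A$, I would first note that $j \in \mathbf{B}$ forces $a_{jj} \neq 0$, hence $c_j \neq 0$, and that $\|c_j\|^2 = a_{jj}^2 + \sum_{i \neq j} a_{ij}^2 < 2 a_{jj}^2$. Consequently the normalized column $u_j = c_j / \|c_j\|$ satisfies $\langle u_j, e_j \rangle^2 = a_{jj}^2 / \|c_j\|^2 > 1/2$, where $e_j$ is the $j$th standard basis vector. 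In words: once normalized, each dominating column has more than half of its energy on its own coordinate axis.

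Let $V$ be the column space of $A$, so that $\dim V = r$, and let $P$ be the orthogonal projection onto $V$. Since each $u_j$ lies in $V$ and $P$ is self-adjoint with $P u_j = u_j$, I would rewrite $\langle u_j, e_j \rangle = \langle u_j, P e_j \rangle$ and apply Cauchy--Schwarz with the unit vector $u_j$ to obtain $\|P e_j\| \geq |\langle u_j, P e_j \rangle| > 1/\sqrt{2}$, i.e. $\|P e_j\|^2 > 1/2$ for every $j \in \mathbf{B}$. Summing over $\mathbf{B}$, and using that the summands are nonnegative, gives
\[
\frac{k}{2} < \sum_{j \in \mathbf{B}} \|P e_j\|^2 \;\leq\; \sum_{j=1}^{p} \|P e_j\|^2 \;=\; \trace(P) \;=\; \dim V \;=\; r,
\]
where the penultimate equality uses $\|P e_j\|^2 = e_j^\T P e_j$ for the symmetric idempotent $P$. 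Hence $k < 2r$, and since $k$ is an integer, $k \leq 2r - 1$.

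The only genuinely delicate step is the passage from the per-column dominance to $\|P e_j\|^2 > 1/2$: the dominance inequality is phrased in terms of $e_j$, which does \emph{not} lie in $V$, so the inequality has to be routed through the projected vector $P e_j$, exploiting that the normalized column $u_j \in V$ is absorbed by the self-adjoint $P$. Once that observation is in place, the trace identity $\sum_j \|P e_j\|^2 = \trace(P) = r$ performs the dimension count automatically, and the strictness of the hypothesis is precisely what upgrades the bound from $k \leq 2r$ to $k \leq 2r - 1$.
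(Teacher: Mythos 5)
Your proof is correct, and while it arrives at the same counting identity as the paper's proof, the route to the key per-index inequality is genuinely different and cleaner. The paper first reduces without loss of generality to an orthonormal basis $V \in \mathbf{R}^{p \times r}$ of the column space with rows $b_1^{\T},\ldots,b_p^{\T}$, reformulates ``some vector of the column space is dominated by its $j$th entry'' as solvability of $k^{\T}(I_r - 2b_jb_j^{\T})k < 0$, and extracts $\|b_j\|^2 > 1/2$ from the negativity of the smallest eigenvalue of $I_r - 2b_jb_j^{\T}$, before concluding via $\sum_{j=1}^p \|b_j\|^2 = r$. Your quantity $\|Pe_j\|^2$ is exactly the paper's $\|b_j\|^2$ (take $P = VV^{\T}$), and your trace identity $\sum_{j=1}^p \|Pe_j\|^2 = \trace(P) = r$ is the same budget the paper spends; what differs is that you obtain $\|Pe_j\|^2 > 1/2$ directly from the actual normalized column $u_j \in V$, via $\langle u_j, e_j\rangle = \langle u_j, Pe_j\rangle$ and Cauchy--Schwarz, with no orthonormalization step, no existence quantifier ranging over the whole column space, and no eigenvalue computation. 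This buys brevity and a basis-free argument, and it in fact yields for free the slight generalization that the paper's reformulation makes explicit --- any family of vectors in an $r$-dimensional subspace, each dominated by a distinct coordinate, has size less than $2r$ --- since your argument uses only $u_j \in V$, never that $u_j$ is literally a column of $A$. Both proofs exploit the strictness of the dominance hypothesis in the same way to upgrade $k < 2r$ to $k \leq 2r-1$, and your handling of that step (including the implicit triviality of the case $\mathbf{B} = \emptyset$) is sound.
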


\begin{proof}{Proof}{}
Let $a_j$ be the $j$th column vector in $A$. If it satisfies $a_{jj}^2>\sum_{i \neq j}a_{ij}^2$, we say this column is diagonally dominant and is dominated by the $j$th element. Let $\mathbf{R}^p$ denote the dimension $p$ vector space, and $\mathbf{R}^{p,r}$ denote the column space of $A$. Straightforwardly, $\mathbf{R}^{p,r}$ is a subspace of $\mathbf{R}^p$ that contains at most $r$ linearly independent vectors.

\emph{Finding out the upper bound of the number of diagonally dominant column vectors in $A$ is equivalent to considering at most how many vectors in $\mathbf{R}^{p,r}$ can be dominated by one of its entries}. The equivalence requires, when we count in $\mathbf{R}^{p,r}$, if two vectors are dominated by the same entry (e.g. $j$th), they are counted as one vector. Now, we count in $\mathbf{R}^{p,r}$.

Without loss of generality, we assume the first $r$ columns ($a_1,\ldots,a_r$) in $A$ are orthogonal to each other and are unit vectors. This is valid because for any given $A$, without changing the column space, we can (1) change the order of the columns by moving $r$ linearly independent column vectors to the left and (2) orthonormalize these linearly independent vectors.

Let 
\[
V_{p \times r}=\left(
\begin{matrix}
a_1,\ldots,a_r
\end{matrix}
\right)
=\left(
\begin{matrix}
b_1^{\T}\\
\vdots\\
b_p^{\T}
\end{matrix}
\right),
\]
in which $b_1,\ldots,b_p$ are $r \times 1$ vectors. Any vector in $\mathbf{R}^{p,r}$ can be written as $V_{p \times r}k$ where $k$ is a $r \times 1$ vector; therefore, a vector dominated by the $j$th element can be in $\mathbf{R}^{p,r}$ if and only if there is a vector $k \neq 0$ and 
\[
(b_j^{\T}k)^2>\sum_{i \neq j}(b_i^{\T}k)^2.
\]
The inequality is equivalent to 
\[
k^{\T}b_jb_j^{\T}k>\sum_{i \neq j}k^{\T}b_ib_i^{\T}k,
\]
and
\[
k^{\T}(V^{\T}V-2b_jb_j^{\T})k<0.
\]

The existence of $k$ suggests $V^{\T}V-2b_jb_j^{\T}$ has negative eigenvalues. As $V$ consists of orthonormal vectors, we conclude the smallest eigenvalue of 
\[
V^{\T}V-2b_jb_j^{\T}=
I_{r}
-2b_jb_j^{\T}
\]
must be negative.

Let $\lambda_{\min}(\cdot)$ be the smallest eigenvalue of a matrix and $u$ be the corresponding eigenvector of $\lambda_{\min}\left(I_{r}-2b_jb_j^{\T}\right)$. We have
\begin{eqnarray*}
\lambda_{\min}\left(I_{r}-2b_jb_j^{\T}\right) &
=&u^{\T}\left(I_{r}-2b_jb_j^{\T}\right)u\\
&=&1-2(u^{\T}b_j)^2\\
&\geq &1-2\|b_j\|^2
\end{eqnarray*}
and consequently $\|b_j\|^2>1/2$. Finally, noticing $\sum_{i=1}^p\|b_i\|^2=\sum_{j=1}^r\|a_j\|^2=r$, we conclude there are at most $2r-1$ $b_j$ with $\|b_j\|^2>1/2$.
\end{proof}

\begin{lemma}{Lemma 4.}{}
When $D$ is fixed and positive definite, the objective function 
\[\trace\left\{(D-L)S\right\}-\log|D-L|\]
can be minimized with respect to $L$ analytically. 
 
Eigen-decompose $D^{1/2}SD^{1/2}$, let $w_1,\ldots,w_p$ be the eigenvalues in descending order and $u_1,\ldots,u_p$ be the associated eigenvectors. Let $U=(u_1,\ldots,u_r)$, $V=\diag\{1-1/\max{(w_1,1)},$ $\ldots,1-1/\max{(w_r,1)}\}$, then
$L=D^{1/2}U V U^{\T}D^{1/2}$ is the analytic solution.
\label{lemma:fix D solve L}
\end{lemma}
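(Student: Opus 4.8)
The plan is to change variables so that the problem becomes a clean trace--log-determinant optimization, to separate the objective along the eigen-structure of a single fixed matrix, and then to invoke a majorization inequality to pin down the optimal eigenvectors. First I would exploit the positive-definiteness of $D$ by setting $\tilde L = D^{-1/2} L D^{-1/2}$ and $W = D^{1/2} S D^{1/2}$, where $D^{1/2}$ is the symmetric square root. Since $D^{-1/2}$ is symmetric and invertible, $\tilde L$ is symmetric, positive semi-definite, and of the same rank as $L$; conversely every such $\tilde L$ yields a feasible $L = D^{1/2}\tilde L D^{1/2}$. Using $D - L = D^{1/2}(I_p - \tilde L) D^{1/2}$, the objective becomes
\[
\trace\{(D-L)S\} - \log|D-L| = \trace(W) - \trace(\tilde L W) - \log|I_p - \tilde L| - \log|D|,
\]
so, up to the additive constant $\trace(W) - \log|D|$, minimizing over $L$ is equivalent to minimizing $g(\tilde L) = -\trace(\tilde L W) - \log|I_p - \tilde L|$ over symmetric positive semi-definite $\tilde L$ with $\rank(\tilde L) \le r$, where feasibility forces $I_p - \tilde L$ to be positive definite, i.e.\ all eigenvalues of $\tilde L$ lie in $[0,1)$.

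Next I would diagonalize the problem. Writing the eigen-decomposition $\tilde L = \sum_{k=1}^r v_k q_k q_k^\T$ with $v_k \in [0,1)$ and orthonormal $q_k$, and noting that $I_p - \tilde L$ has eigenvalues $1 - v_k$ along $q_k$ and $1$ elsewhere, the objective separates as
\[
g(\tilde L) = \sum_{k=1}^r \left\{-v_k a_k - \log(1 - v_k)\right\}, \qquad a_k = q_k^\T W q_k \ge 0.
\]
For fixed directions $q_k$, each summand is a scalar function of $v_k$; elementary calculus gives the minimizer $v_k^\star = 1 - 1/\max(a_k, 1)$, where the unconstrained stationary point $1 - 1/a_k$ is infeasible and replaced by $0$ when $a_k < 1$, with optimal value $\psi(a_k)$, where $\psi(a) = 1 + \log a - a$ for $a \ge 1$ and $\psi(a) = 0$ for $a < 1$. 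Observe that $v_k^\star$ reproduces exactly the diagonal matrix $V$ in the statement once $a_k = w_k$.

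The main obstacle is the remaining optimization over the eigenvectors, namely showing that $\sum_{k=1}^r \psi(a_k)$ is minimized by taking $q_k = u_k$, so that $a_k = w_k$. Here I would argue via majorization. The function $f = -\psi$ is non-decreasing and convex (one checks $f'\ge 0$ and $f''>0$ on $a>1$, with matching value and derivative at the kink $a=1$). By Ky Fan's maximum principle, the diagonal entries $a_1, \dots, a_r$ of $Q^\T W Q$, for $Q = (q_1,\dots,q_r)$ with orthonormal columns, are weakly majorized from above by the top eigenvalues $w_1, \dots, w_r$ of $W$; that is, the $j$ largest among the $a_k$ sum to at most $\sum_{i=1}^j w_i$ for every $j \le r$. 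A standard Hardy--Littlewood--P\'olya inequality for convex non-decreasing functions then gives $\sum_k f(a_k) \le \sum_k f(w_k)$, equivalently $\sum_k \psi(a_k) \ge \sum_k \psi(w_k)$, with equality attained at $q_k = u_k$.

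Combining the three steps, the minimizer is $\tilde L = U V U^\T$ with $U = (u_1, \dots, u_r)$ and $V = \diag\{1 - 1/\max(w_1,1), \dots, 1 - 1/\max(w_r,1)\}$, and transforming back yields $L = D^{1/2} U V U^\T D^{1/2}$, as claimed. The one point genuinely requiring care is the interplay between the convexity and monotonicity of $\psi$ and the direction of the weak majorization, since it is precisely this combination that makes selecting the \emph{largest} eigenvalues of $W$ optimal; I would also verify at the end that the resulting $v_k^\star \in [0,1)$, so that $I_p-\tilde L\succ 0$ and the candidate solution is indeed feasible.
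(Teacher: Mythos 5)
Your proof is correct, and feasibility is handled properly ($v_k^\star \in [0,1)$, positive semi-definiteness, rank at most $r$, attainment at $q_k=u_k$). The opening reduction --- conjugating by $D^{1/2}$ so that the problem becomes minimizing $-\trace(\tilde L W)-\log|I_p-\tilde L|$ with $W=D^{1/2}SD^{1/2}$, and separating the objective along the spectral decomposition of $\tilde L$ --- is exactly the paper's, but you resolve the crucial eigenvector step by a genuinely different argument. The paper fixes the eigenvalues $\tilde v_1\geq\cdots\geq\tilde v_r\geq 0$ and bounds the bilinear term directly, via $\trace\{\tilde V\tilde U^{\T}W\tilde U\}\leq \sum_i \tilde v_i\,\lambda_i(\tilde U^{\T}W\tilde U)\leq \sum_i \tilde v_i w_i$, invoking von Neumann's trace inequality and the Poincar\'e separation theorem (Theorems 3.34 and 3.19 of Schott, 2005), and only afterwards maximizes the resulting separable bound in the $\tilde v_i$; since the bound is uniform in $\tilde U$ and attained at $\tilde U=U$, this settles the lemma. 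You reverse the order: you profile out the eigenvalues first, obtaining the value function $\psi(a_k)$ evaluated at the diagonal entries $a_k=q_k^{\T}Wq_k$ of the compression (which are generally not its eigenvalues), and then compare $\sum_k\psi(a_k)$ with $\sum_k\psi(w_k)$ by combining Ky Fan's maximum principle --- which, applied to the subsets of columns, gives the weak submajorization of $(a_1,\ldots,a_r)$ by $(w_1,\ldots,w_r)$ --- with the Tomi\'c--Weyl/Hardy--Littlewood--P\'olya inequality for the nondecreasing convex function $f=-\psi$; your verification that $f$ is $C^1$ and convex across the kink at $a=1$ is the one delicate point, and you got it right. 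The paper's route is the more elementary of the two (two off-the-shelf matrix inequalities, no value-function analysis), while yours proves the slightly stronger fact that the profiled objective over directions is itself minimized by the top eigenvectors of $W$, and it would extend verbatim to any spectral criterion whose profiled value is a nonincreasing concave function of the Rayleigh quotients $q_k^{\T}Wq_k$. Both approaches then reduce to the same scalar problems, maximizing $v w_i+\log(1-v)$ over $v\in[0,1)$, with solution $1-1/\max(w_i,1)$, recovering the stated $V$ and $L=D^{1/2}UVU^{\T}D^{1/2}$.
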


\begin{proof}{Proof}{}
Since $D$ and $S$ are fixed, the target can be simplified as maximizing
\begin{align*}
&\trace(LS)+\log|D-L|\\
=&
\trace\left\{(D^{-1/2}LD^{-1/2})D^{1/2}SD^{1/2}\right\}+\log|I_p-D^{-1/2}LD^{-1/2}|+\log|D|.
\end{align*}

Let the low-rank part be eigen-decomposed as $D^{-1/2}LD^{-1/2}=\tilde{U}\tilde{V}\tilde{U}^{\T}$, in which $\tilde{U}=(\tilde{u}_1,\ldots,\tilde{u}_r)$ is a $p\times r$ matrix and $\tilde{V}=\diag(\tilde{v}_1,\ldots,\tilde{v}_r)$ is a $r \times r$ diagonal matrix. Also, without of loss generality, let $\tilde{v}_1,\ldots,\tilde{v}_r$ be in descending order. Then, we need to maximize
\begin{align}
\label{DL_eq:fix D solve L}
\trace\left\{\tilde{V}\tilde{U}^{\T}D^{1/2}SD^{1/2}\tilde{U}\right\}+\log|I_r-\tilde{V}|.
\end{align}

Regardless of $\tilde{V}$, We have 
\begin{eqnarray*}
\trace\left\{\tilde{V}\tilde{U}^{\T}D^{1/2}SD^{1/2}\tilde{U}\right\}
&\leq &\sum_{i=1}^r \lambda_i(\tilde{V}) \lambda_i(\tilde{U}^{\T}D^{1/2}SD^{1/2}\tilde{U})\nonumber\\
&\leq & \sum_{i=1}^r \tilde{v}_i \lambda_i(D^{1/2}SD^{1/2})\nonumber\\
&=& \sum_{i=1}^r \tilde{v}_i w_i,
\end{eqnarray*}
where $\lambda_i(\cdot)$ is the $i$th largest eigenvalue of the input matrix. The first and second inequalities follow Theorem 3.34 and Theorem 3.19 in \cite{schott2005matrix} respectively. The maximum can be achieved when $\tilde{U} = U$.

When $\tilde{U} = U$, maximizing (\ref{DL_eq:fix D solve L}) is equivalent to maximizing 
\begin{align*}
\tilde{v}_iw_i+\log(1-\tilde{v}_i) \quad  (i = 1,\ldots,r),
\end{align*}
subject to $\tilde{v}_i \in [0,1)$. By basic calculus, we need $\tilde{v}_i = 1-1/\max(w_i,1)$.
\end{proof}


\end{appendix}

\CJShistory
\end{document}